\tikzset{
vertex/.style={circle,draw,black,align=center,inner sep=0cm, minimum size=0.6cm,fill=white,anchor=center},
vertex-s/.style={vertex,minimum size=0.2cm},
dest/.style={circle,draw,black,densely dashed, minimum size=0.7cm},
agent/.style={vertex,fill={rgb:red,1;green,1;blue,1},text=white},
agenthigh/.style={agent,fill={rgb:red,0;green,1;blue,3}},
agentlow/.style={agent,fill={rgb:red,3;green,1;blue,0}},
agentchange/.style={agent,pattern=north east lines,pattern color={rgb:red,0;green,1;blue,3},text=black},
agent0/.style={agenthigh},
agent1/.style={agent,fill={rgb:green,1;blue,1}},
agent2/.style={agent,fill={rgb:red,1;yellow,2;blue,0},text=black},
agent3/.style={agentlow},
agent-d/.style={agent,line width=0.5mm,fill=white,text=black},
agent0-d/.style={agent-d,draw={rgb:red,0;green,1;blue,3}},
agent1-d/.style={agent-d,draw={rgb:green,1;blue,1}},
agent2-d/.style={agent-d,draw={rgb:red,1;yellow,2;blue,0}},
agent3-d/.style={agent-d,draw={rgb:red,3;green,1;blue,0}},
line/.style={black},
move/.style={line,very thick,->},
move-normal/.style={move,color={rgb:red,1;green,1;blue,1}},
move-high/.style={move,color={rgb:red,0;green,1;blue,3}},
move-low/.style={move,color={rgb:red,3;green,1;blue,0}},
move-ideal/.style={move},
pi/.style={line,very thick,->},
bt/.style={line,double,->},
}
\renewcommand\footnotetextcopyrightpermission[1]{}
\begin{document}

\title{winPIBT: Extended Prioritized Algorithm \\for Iterative Multi-agent Path Finding}  

\author{Keisuke Okumura}
\affiliation{%
 \institution{Tokyo Institute of Technology}
}
\email{okumura.k@coord.c.titech.ac.jp}

\author{Yasumasa Tamura}
\affiliation{%
 \institution{Tokyo Institute of Technology}
}
\email{tamura@c.titech.ac.jp}

\author{Xavier D\'{e}fago}
\affiliation{%
 \institution{Tokyo Institute of Technology}
}
\email{defago@c.titech.ac.jp}

\begin{abstract}  
 The problem of Multi-agent Path Finding (MAPF) consists in providing agents with efficient paths while preventing collisions.
 Numerous solvers have been developed so far since MAPF is critical for practical applications such as automated warehouses.
 The recently-proposed Priority Inheritance with Backtracking (PIBT) is a promising decoupled method that solves MAPF iteratively with flexible priorities.
 The method is aimed to be decentralized and has a very low computational cost, but it is shortsighted in the sense that it plans only one step ahead, thus occasionally resulting in inefficient plannings.
 This work proposes a generalization of PIBT, called windowed PIBT (winPIBT), that introduces a configurable time window.
 winPIBT allows agents to plan paths anticipating multiple steps ahead.
 We prove that, similarly to PIBT, all agents reach their own destinations in finite time as long as the environment is a graph with adequate properties, e.g., biconnected.
 Experimental results over various scenarios confirm that winPIBT mitigates livelock situations occurring in PIBT, and usually plans more efficient paths given an adequate window size.
\end{abstract}

\keywords{multi-agent path finding, collision avoidance, online planning}  

\maketitle

\pagestyle{fancy}
\fancyhf{}
\fancyhead[LO,LE]{winPIBT for Iterative Multi-agent Path Finding}
\fancyhead[RO,RE]{K. Okumura, Y. Tamura, X. D\'efago}
\cfoot{\vspace{2mm}\thepage\ of \ref{TotPages}}


\section{Introduction}
Multi-agent Path Finding (MAPF) is a problem that makes multiple agents move to their destinations without collisions.
MAPF is now receiving a lot of attention due to its high practicality, e.g., traffic control~\cite{dresner2008multiagent}, automated warehouse~\cite{wurman2008coordinating}, or airport surface operation~\cite{morris2016planning}, etc.
The efficiency of planned paths is usually evaluated through the sum of travel time.
Since search space grows exponentially with the number of agents, the challenge is obtaining relatively efficient paths with acceptable computational time.

Considering realistic scenarios, MAPF must be solved iteratively and in real-time since many target applications actually require agents to execute streams of tasks;
MAPF variants tackle this issue, e.g., \textit{lifelong} MAPF~\cite{ma2017lifelong}, \textit{online} MAPF~\cite{vsvancara2019online}, or, \textit{iterative} MAPF~\cite{okumura2019priority}.
In such situations, decoupled approaches, more specifically, approaches based on prioritized planning~\cite{erdmann1987multiple,silver2005cooperative}, are attractive since they can reduce computational cost.
Moreover, decoupled approaches are relatively realistic to decentralized fashion, i.e., each agent determines its own path while negotiating with others.
Thus, they have the potential to receive benefits of decentralized systems such as scalability and concurrency.

\begin{figure}[t]
 \centering
 \begin{tikzpicture}
  \node[vertex](v0) at (0,0) {};
  \node[vertex, right=0.35 of v0](v1) {};
  \node[vertex, right=0.35 of v1](v2) {};
  \node[vertex, right=0.35 of v2](v3) {};
  \node[agenthigh, right=0.35 of v3, label=right:{\scriptsize {\tiny priority:} \textbf{high}}](v4) {$a_{1}$};
  \node[vertex, above=0.35 of v0](v5) {};
  \node[vertex, above=0.35 of v2](v6) {};
  \node[vertex, above=0.35 of v5](v7) {};
  \node[agentlow, right=0.35 of v7, label=above:{\scriptsize {\tiny priority:} low}](v8) {$a_{2}$};
  \node[vertex, above=0.35 of v6](v9) {};
  \foreach \u / \v in {v0/v1,v1/v2,v2/v3,v3/v4,v0/v5,v2/v6,v5/v7,v6/v9,v7/v8,v8/v9}
  \draw[line] (\u) -- (\v);
  \coordinate[below left=0.2cm of v4]    (a0-0);
  \coordinate[below right=0.2cm of v2]   (a0-1);
  \coordinate[above right=0.2cm of v9]   (a0-2);
  \coordinate[above right=0.2cm of v8]   (a0-3);
  \foreach \u / \v in {a0-0/a0-1,a0-1/a0-2}
  \draw[move-high,->](a0-0)--(a0-1)--(a0-2)--(a0-3);
  \coordinate[below right=0.02cm of v8]  (a1-0);
  \coordinate[below left =0.02cm of v9]  (a1-1);
  \coordinate[above left =0.02cm of v6]  (a1-2);
  \coordinate[left       =0.12cm of a1-2](a1-3);
  \coordinate[below left =0.18cm of a1-1](a1-4);
  \coordinate[below right=0.2cm of v7]   (a1-5);
  \coordinate[above right=0.2cm of v0]   (a1-6);
  \coordinate[above left=0.2cm of v4]    (a1-7);
  \draw[move-low,->](a1-0)--(a1-1)--(a1-2)--(a1-3)--(a1-4)--(a1-5)--(a1-6)--(a1-7);
  \coordinate[below=0.2cm of v8]         (ai-0);
  \coordinate[below right=0.42cm of v7]  (ai-1);
  \coordinate[above right=0.42cm of v0]  (ai-2);
  \coordinate[above=0.2cm of v4]         (ai-3);
  \draw[move-ideal,->](ai-0)--(ai-1)--(ai-2)--(ai-3);
  \coordinate[above right=1.5cm of v2](ai-ls);
  \coordinate[right=0.4cm of ai-ls,label=right:{\scriptsize ideal path of $a_{2}$}](ai-lg);
  \draw[move-ideal,->](ai-ls)--(ai-lg);
  \coordinate[above=0.4cm of ai-ls](a1-ls);
  \coordinate[right=0.4cm of a1-ls,label=right:{\scriptsize path of $a_{2}$ (PIBT)}](a1-lg);
  \draw[move-low,->](a1-ls)--(a1-lg);
  \coordinate[above=0.4cm of a1-ls](a0-ls);
  \coordinate[right=0.4cm of a0-ls,label=right:{\scriptsize path of $a_{1}$}](a0-lg);
  \draw[move-high,->](a0-ls)--(a0-lg);
 \end{tikzpicture}
 \caption{Motivating example. $a_{1}$ and $a_{2}$ swap their places.}
 \label{fig:motivating-example}
\end{figure}

Priority Inheritance with Backtracking (PIBT)~\cite{okumura2019priority}, a decoupled method proposed recently, solves iterative MAPF by relying on prioritized planning with a unit-length time window, i.e., it determines only the next locations of agents.
With flexible priorities, PIBT ensures \textit{reachability}, i.e., all agents reach their own destinations in finite time, provided that the environment is a graph with adequate properties, e.g., biconnected.
Unfortunately, the efficiency of the paths planned by PIBT is underwhelming as a result of locality.
This is illustrated in Fig.~\ref{fig:motivating-example} which depicts two actual paths (the red and blue arrows) that PIBT plans when an agent $a_{1}$ has higher priority than an agent $a_{2}$.
In contrast, the black arrow depicts an ideal path for $a_{2}$.
Obviously, the agent with lower priority ($a_{2}$) takes unnecessary steps.
This comes as a result of the shortsightedness of PIBT, i.e., PIBT plans paths anticipating only a single step ahead.
Extending the time window is hence expected to improve overall path efficiency thanks to better anticipation.

In this study, we propose a generalized algorithm of PIBT with respect to the time window, called Windowed PIBT (\winpibt).
\winpibt allows agents to plan paths anticipating multiple steps ahead.
Approximately, for an agent $a_i$, \winpibt works as follows.
At first, compute the shortest path while avoiding interference with other paths.
Then, try to secure time-node pairs sequentially along that path (request).
If the requesting node is the last node assigned to some agent $a_j$, then keep trying to let $a_j$ plan its path one step ahead and move away from the node by providing the priority of $a_i$ until there are no such agents.
The special case of \winpibt with a unit-length window is hence similar to PIBT.

Our main contributions are two-folds:
1)~We propose an algorithm \winpibt inheriting the features of PIBT, and prove the reachability in equivalent conditions to PIBT except for the upper bound on time steps.
To achieve this, we introduce a safe condition for paths with different lengths, called \emph{disentangled} condition.
2)~We demonstrate both the effectiveness and the limitation of \winpibt with fixed windows through simulations in various environments.
The results indicate the potential for more adaptive versions.

The paper organization is as follows.
Section~\ref{sec:relatedworks} reviews the existing MAPF algorithms.
Section~\ref{sec:preliminary} defines the terminology and the problem of iterative MAPF, and reviews the PIBT algorithm.
We describe the \safe condition here.
Section~\ref{sec:algo} presents the \winpibt algorithm and its characteristics.
Section~\ref{sec:evaluation} presents empirical results of the proposal in various situations.
Section~\ref{sec:conclusion} concludes the paper and discusses future work.

\section{Related Works}
\label{sec:relatedworks}
We later review PIBT~\cite{okumura2019priority} in detail in section~\ref{subsec:pibt}.

Numerous optimal MAPF algorithms are proposed so far, e.g., search-based optimal solvers~\cite{felner2017search}, however, finding an optimal solution is NP-hard~\cite{yu2013structure}.
Thus, developing sub-optimal solvers is important.
There are complete sub-optimal solvers, e.g.,
BIBOX~\cite{surynek2009novel} for biconnected graphs,
TASS~\cite{khorshid2011polynomial} and multiphase planning method~\cite{peasgood2008complete} for trees.
Push and Swap/Rotate~\cite{luna2011push,de2013push} relies on two types of macro operations;
move an agent towards its goal (push), or, swap the location of two agents (swap).
Push and Swap has several variants, e.g., with simultaneous movements~\cite{sajid2012multi}, or,
with decentralized implementation~\cite{wiktor2014decentralized,zhang2016discof}.
Priority inheritance in (win)PIBT can be seen as ``push'', but note that there is no ``swap'' in (win)PIBT.

Prioritized planning~\cite{erdmann1987multiple} is incomplete but computationally cheap.
The well-known algorithm of prioritized planning for MAPF is Hierarchical Cooperative \astar (\hca)~\cite{silver2005cooperative}, which sequentially plans paths in order of priorities of agents while avoiding conflicts with previously planned paths.
This class of approaches is scalable for the number of agents, and is often used as parts of MAPF solvers~\cite{wang2011mapp,vcap2015prioritized}.
Moreover, prioritized planning are designed to be decentralized, i.e., each agent determines its own path while negotiating with others~\cite{velagapudi2010decentralized,vcap2015prioritized}.
Windowed \hca (\whca)~\cite{silver2005cooperative} is a variant of \hca, which uses a limited lookahead window.
\whca motivates \winpibt since the longer window causes better results in path efficiency and PIBT partly relies on \whca where the window is a unit-length.
Conflict Oriented \whca~\cite{bnaya2014conflict} is an extension of \whca by focusing on the coordination around conflicts, which \winpibt is also focusing on.
Since a priority ordering is crucial, how to adjust priority orders has been studied~\cite{azarm1997conflict,bennewitz2002finding,van2005prioritized,bnaya2014conflict,ma2019searching}.
Similarly to PIBT, \winpibt gives agents their priorities dynamically online so these studies are not closely relevant, however, we say it is an interesting direction to combine these insights into our proposal, especially in initial priorities.
A recent theoretical analysis of prioritized planning~\cite{ma2019searching} identifies instances that fail for any order of static priorities, which motivates planning with \emph{dynamic} priorities, such as taken here.

There are variants of classical MAPF.
Online MAPF~\cite{vsvancara2019online} addresses a dynamic group of agents, i.e., agents newly appear, or, agents disappear when they reach their goals.
Lifelong MAPF~\cite{ma2017lifelong}, defined as the multi-agent pickup and delivery (MAPD) problem, is setting for conveying packages in an automated warehouse.
In MAPD, the system issues goals, namely, pickup and delivery locations, dynamically to agents.
Iterative MAPF~\cite{okumura2019priority} is an abstract model to address the behavior of multiple moving agents, which consists of solving route planning and task allocation.
This model can cover both classical MAPF and MAPD.
We use iterative MAPF to describe our algorithm.

\section{Preliminary}
\label{sec:preliminary}
We now define the terminology, review the PIBT algorithm and introduce \safe condition of paths.

\subsection{Problem Definition}
We first define an abstract model, iterative MAPF.
Then, we destinate two concrete instances, namely, classical MAPF and \naive iterative MAPF.
Both instances only focus on route planning, and task allocation is regarded as input.

The system consists of a set of agents, $A = \{a_{1}, \ldots, a_{n} \}$,
and an environment given as a (possibly directed) graph $G = (V, E)$, where agents occupy nodes in $V$ and move along edges in $E$.
$G$ is assumed to be
1)~\emph{simple}, i.e., devoid of loops and multiple edges,
and
2)~\emph{strongly-connected}, i.e., every node is reachable from every other node.
These requirements are met by simple undirected graphs.
Let $v_{i}(t)$ denote the node occupied by agent $a_{i}$ at discrete time~$t \in \mathbb{N}$.
The initial node $v_{i}(0)$ is given as input.
At each step, an agent $a_{i}$ can either move to an adjacent vertex or stay at the current vertex.
Agents must avoid
1)~\textit{vertex conflict}: $v_{i}(t) \neq v_{j}(t)$,
and
2)~\textit{swap conflict} with others: $v_{i}(t) \neq v_{j}(t+1) \lor v_{j}(t+1) \neq v_{j}(t)$.
Rotations (\emph{cycle conflict}) are not prohibited, i.e.,
$v_{i}(t+1)=v_{j}(t) \land v_{j}(t+1)=v_{k}(t) \land \cdots \land v_{l}(t+1)=v_{i}(t)$
is possible.

Consider a stream of tasks $\Gamma = \{ \tau_{1}, \tau_2, \dots \}$.
Each task is defined as a finite set of goals $\tau_{j} = \{ g_{1}, g_{2}, \dots, g_{m} \}$ where $g_{k} \in V$, possibly with a partial order on $g_k$.
An agent is \emph{free} when it has no assigned task.
Only a free agent can be assigned a task $\tau_{j}$.
When $\tau_{j}$ is assigned to $a_{i}$, $a_{i}$ starts visiting goals in $\tau_{j}$.
$\tau_{j}$ is completed when $a_{i}$ reaches the final goal in $\tau_{j}$ after having visited all other goals, then $a_{i}$ is free again.

The solution includes two parts:
1)~route planning: plan paths for all agents without collisions,
2)~task allocation: allocate a subset of $\Gamma$ to each agent,
such that all tasks are completed in finite time.
The objective function should be determined by concrete instances of iterative MAPF, as shown immediately after.

\subsubsection{Classical MAPF}
A singleton task $\{ g_i \}$ is assigned to each agent $a_{i}$ beforehand, where $g_i$ is a goal for $a_i$.
Since classical MAPF usually requires the solution to ensure that all agents are at their goals simultaneously, a new task $\{ g_{i} \}$ is assigned to $a_{i}$ when $a_{i}$ leaves $g_{i}$.
There are two commonly used objective functions: sum of costs (SOC) and makespan.
SOC is sum of timesteps when \textit{each} agent reaches its given goal and never moves from it.
The makespan is the timestep when \textit{all} agents reach their given goals.

\subsubsection{\Naive Iterative MAPF}
This setting gives a new singleton task, i.e., a new goal, immediately to agents who arrive at their current goals.
We here modify the termination a little to avoid the sensitive effect of the above-defined termination on the performance.
Given a certain integer number $K$, the problem is regarded as solved when tasks issued from 1st to $K$-th are all completed.
The rationale is to analyze the results in operation.
Similarly to classical MAPF, there are two objective functions:
average service time, which is defined as the time interval from task generation
to its completion, or,
makespan, which is the timestep corresponding to the termination.

\subsection{PIBT}
\label{subsec:pibt}
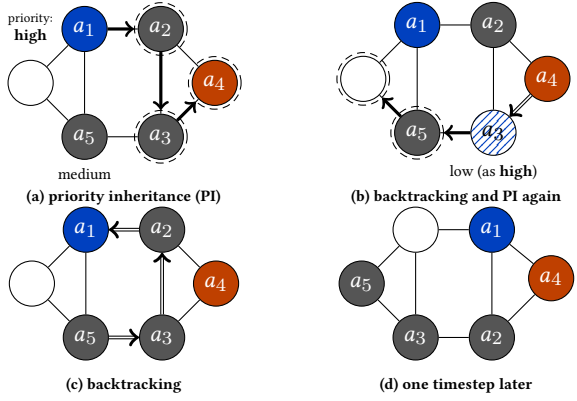
\begin{figure}
 \centering
 \begin{tabular}{cc}
  \begin{minipage}[t]{0.45\hsize}
   \centering
    \begin{tikzpicture}
     \node[vertex](v0) at (0,0) {};
     \node[agent,below right=0.4cm of v0,label=below:{\scriptsize medium}](v1) {$a_{5}$};
     \node[agenthigh,above right=0.4cm of v0](v2) {$a_{1}$};
     \node[agent,right=0.4cm of v1](v3) {$a_{3}$};
     \node[agent,right=0.4cm of v2](v4) {$a_{2}$};
     \node[agentlow,above right=0.4cm of v3](v5) {$a_{4}$};
     \foreach \u / \v in {v0/v1,v0/v2,v1/v3,v2/v4,v3/v5,v4/v5,v1/v2,v3/v4}
     \draw[line] (\u) -- (\v);
     \node[above=0.1cm of v0]() {\scriptsize $\textbf{high}$};
     \node[above=0.35cm of v0](){\scriptsize {\tiny priority:}};
     \node[dest](d0) at (v4) {};
     \node[dest](d1) at (v3) {};
     \node[dest](d2) at (v5) {};
     \draw[pi] (v2)--(d0);
     \draw[pi] (v4)--(d1);
     \draw[pi] (v3)--(d2);
    \end{tikzpicture}
   \vspace{-0.2cm}
   \subcaption{priority inheritance (PI)}
   \label{fig:pibt:pi}
  \end{minipage}
  &
  \begin{minipage}[t]{0.45\hsize}
   \centering
    \begin{tikzpicture}
     \node[vertex](v0) at (0,0) {};
     \node[agent,below right=0.4cm of v0](v1) {$a_{5}$};
     \node[agenthigh,above right=0.4cm of v0](v2) {$a_{1}$};
     \node[agentchange,right=0.4cm of v1,label=below:{\scriptsize low (as \textbf{high})}](v3) {$a_{3}$};
     \node[agent,right=0.4cm of v2](v4) {$a_{2}$};
     \node[agentlow,above right=0.4cm of v3](v5) {$a_{4}$};
     \foreach \u / \v in {v0/v1,v0/v2,v1/v3,v2/v4,v3/v5,v4/v5,v1/v2,v3/v4}
     \draw[line] (\u) -- (\v);
     \node[dest](d2) at (v1) {};
     \node[dest](d4) at (v0) {};
     \draw[pi] (v3)--(d2);
     \draw[pi] (v1)--(d4);
     \draw[bt] (v5)--(v3);
    \end{tikzpicture}
   \vspace{-0.2cm}
   \subcaption{backtracking and PI again}
   \label{fig:pibt:btpi}
  \end{minipage}
  \\
  \begin{minipage}[t]{0.48\hsize}
   \centering
   \begin{tikzpicture}
    \node[vertex](v0) at (0,0) {};
    \node[agent,below right=0.4cm of v0](v1) {$a_{5}$};
    \node[agenthigh,above right=0.4cm of v0](v2) {$a_{1}$};
    \node[agent,right=0.4cm of v1](v3) {$a_{3}$};
    \node[agent,right=0.4cm of v2](v4) {$a_{2}$};
    \node[agentlow,above right=0.4cm of v3](v5) {$a_{4}$};
    \foreach \u / \v in {v0/v1,v0/v2,v1/v3,v2/v4,v3/v5,v4/v5,v1/v2,v3/v4}
    \draw[line] (\u) -- (\v);
    \draw[bt] (v1)--(v3);
    \draw[bt] (v3)--(v4);
    \draw[bt] (v4)--(v2);
   \end{tikzpicture}
   \subcaption{backtracking}
   \label{fig:pibt:bt}
  \end{minipage}
  &
  \begin{minipage}[t]{0.48\hsize}
   \centering
   \begin{tikzpicture}
    \node[agent](v0) at (0,0) {$a_{5}$};
    \node[agent,below right=0.4cm of v0](v1) {$a_{3}$};
    \node[vertex,above right=0.4cm of v0](v2) {};
    \node[agent,right=0.4cm of v1](v3) {$a_{2}$};
    \node[agenthigh,right=0.4cm of v2](v4) {$a_{1}$};
    \node[agentlow,above right=0.5cm of v3](v5) {$a_{4}$};
    \foreach \u / \v in {v0/v1,v0/v2,v1/v3,v2/v4,v3/v5,v4/v5,v1/v2,v3/v4}
    \draw[line] (\u) -- (\v);
   \end{tikzpicture}
   \subcaption{one timestep later}
   \label{fig:pibt:result}
  \end{minipage}
 \end{tabular}
 \caption{Example of PIBT.
 Desired nodes for each agent are depicted by dashed circles.
 Flows of priority inheritance and backtracking are drawn as single-line and doubled-line arrows, respectively.
 As a result of consecutive priority inheritance ($a_{1} \rightarrow a_{2} \rightarrow a_{3}$), $a_{4}$ has no escape nodes (\ref{fig:pibt:pi}).
 To avoid collisions, $a_{1}$, $a_2$ and $a_{3}$ have to wait for backtracking before moving.
 In (\ref{fig:pibt:btpi}), $a_{4}$ sends invalid as backtracking message to $a_{3}$ then $a_{3}$ changes its target node ($a_{5}$).
 In this time, $a_{5}$ can successfully move and sends as valid to $a_{3}$.
 Similarly, $a_{2}$ and $a_{1}$ receive backtracking as valid (\ref{fig:pibt:bt}) and then they start moving (\ref{fig:pibt:result}).
 }
 \label{fig:pibt}
\end{figure}

PIBT~\cite{okumura2019priority} gives fundamental collision-free movements of agents to solve iterative MAPF.
PIBT relies
1)~on \whca~\cite{silver2005cooperative} where the window size is a unit-length,
and
2)~on priority inheritance~\cite{sha1990priority} to deal with \textit{priority inversion} akin to the problem in real-time systems.
At each timestep, unique priorities are assigned to agents.
In order of decreasing priorities, each agent plans its next location while avoiding collisions with higher-priority agents.
When a low-priority agent~$X$ impedes the movement of a higher-priority agent~$Y$, agent~$X$ temporarily inherits the higher-priority of agent~$Y$.
Priority inheritance is executed in combination with \textit{backtracking} to prevent agents being stuck.
The backtracking has two outcomes: valid or invalid.
Invalid occurs when an agent inheriting the priority is stuck, forcing the higher-priority agent to replan its path.
Fig.~\ref{fig:pibt} shows an example of PIBT.
In the sense that PIBT changes priorities to agents dynamically online, PIBT is different from classical prioritized approaches.

The foundation of PIBT is the lemma below, which is also important to \winpibt.
\begin{lemma}
 Let $a_{1}$ denote the agent with highest priority at timestep $t$ and $v_{1}^{\ast}$ an arbitrary neighbor node of $v_{1}(t)$.
 If there exists a simple cycle $\mathbf{C} = (v_{1}(t), v_{1}^{\ast}, \dots)$ and $|\mathbf{C}| \geq 3$, PIBT makes $a_{1}$ move to $v_{1}^{\ast}$ in the next timestep.
 \label{lemma:pibt-local-movement}
\end{lemma}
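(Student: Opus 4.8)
The plan is to exploit the fact that $a_1$ has the highest priority, so the only thing that can prevent its move to $v_1^{\ast}$ is another agent currently sitting on $v_1^{\ast}$, and then to use the simple cycle $\mathbf{C}$ as a guaranteed escape route that forces the priority-inheritance/backtracking chain initiated by $a_1$ to return \emph{valid}. First I would dispose of the trivial case: if $v_1^{\ast}$ is unoccupied, then since $a_1$ decides before every other agent and no committed agent constrains it, it simply secures and moves to $v_1^{\ast}$.

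Next, suppose $v_1^{\ast}$ is occupied, and write the cycle as $\mathbf{C} = (c_0, c_1, \dots, c_{k-1})$ with $c_0 = v_1(t)$, $c_1 = v_1^{\ast}$, $k = |\mathbf{C}| \geq 3$, and $c_{k-1}$ adjacent to $c_0$. The occupant of $v_1^{\ast} = c_1$ inherits the highest priority of $a_1$ and searches for an escape; by completeness of the candidate search at each agent (on an invalid backtrack it tries its next neighbour), it will in the worst case try the cycle successor $c_2$, propagating priority inheritance along $\mathbf{C}$. I would then split into two cases. In Case~A, some $c_m$ with $1 \leq m \leq k-1$ is free: the agents on $c_1, \dots, c_{m-1}$ shift one step forward along $\mathbf{C}$; this ``train'' of moves has pairwise-distinct destinations and contains no reversed edge, hence no vertex or swap conflict, so backtracking returns valid up to $a_1$ and $a_1$ moves to $v_1^{\ast}$. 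In Case~B, every $c_i$ is occupied: the chain reaches the agent on $c_{k-1}$, who moves into $c_0 = v_1(t)$; together with $a_1 : c_0 \to c_1$ this realises a rotation around the whole of $\mathbf{C}$, which the model explicitly permits as a cycle conflict, so again backtracking returns valid and $a_1$ moves to $v_1^{\ast}$.

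I would then highlight that the hypothesis $|\mathbf{C}| \geq 3$ is exactly what makes Case~B legitimate: it guarantees $c_{k-1} \neq c_1$, so the pair of moves $c_{k-1} \to c_0$ and $a_1 : c_0 \to c_1$ is a genuine rotation rather than the forbidden swap of two agents across a single edge. A length-two ``cycle'' would degenerate into precisely such a swap, which is why it must be excluded.

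The main obstacle is the wrap-around in Case~B: I must argue that PIBT's backtracking correctly handles the situation where the requested node $c_0$ is held by $a_1$, the still-pending root of the current inheritance chain, and treats the completed rotation as valid rather than stalling or recursing indefinitely. Concretely, I would show that no node is visited twice within a single inheritance chain (so the process is finite) and that reaching the vacating root node yields a valid backtrack; combined with the completeness of the candidate search at each agent, this rules out an invalid outcome for the request on $v_1^{\ast}$, which is exactly the claim.
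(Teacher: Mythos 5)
First, a point of reference: this paper does not actually prove Lemma~\ref{lemma:pibt-local-movement}. It is restated from the original PIBT work~\cite{okumura2019priority} and used here as a black box, so there is no in-paper proof to compare yours against. Judged on its own terms, your proposal follows the standard argument for this lemma: $a_1$ plans first so only the current occupant of $v_1^{\ast}$ can obstruct it, the cycle $\mathbf{C}$ serves as a guaranteed escape route, and $|\mathbf{C}|\geq 3$ is exactly what turns the wrap-around into a legal rotation rather than a forbidden swap. Your explanation of why length-two cycles must be excluded is correct and is the right thing to highlight.

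There is, however, a genuine gap beyond the wrap-around issue you flag. Your Case~A/Case~B split treats the priority-inheritance chain as if it marches along $\mathbf{C}$, but in PIBT each inheriting agent tries candidate nodes in its \emph{own} preference order (distance to its own goal), so the chain generally branches off the cycle first. A failed excursion commits every agent in it to staying at its current node at $t+1$, and such an agent may well be standing on some $c_m\in\mathbf{C}$; if that node of the cycle gets reserved by a decided agent before the chain ever tries the cycle, the ``train shift'' of Case~A and the rotation of Case~B are no longer obviously available. Closing this requires an inductive invariant of roughly the following form: any request issued during $a_1$'s episode to an agent standing on $c_m$ with $m\geq 1$ returns valid, because that agent always retains the candidate $c_{m+1}$ (indices modulo $k$), so the chain can be extended along $\mathbf{C}$ until it reaches $c_0=v_1(t)$, which $a_1$ is vacating and which is not a swap for the agent on $c_{k-1}$ precisely because $k\geq 3$; consequently no agent on $\mathbf{C}\setminus\{c_0\}$ can ever return invalid or be frozen in place by a side branch. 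With that invariant, together with finiteness of the recursion (each agent is requested at most once per timestep), your two cases do yield the lemma; as written, the proposal asserts the conclusion of this induction rather than establishing it.
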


Another key component is dynamic priorities, where the priority of an agent increments gradually until it drops upon reaching its goal.
By combining these techniques, PIBT ensures the following theorem.
\begin{definition}
 $G$ is \emph{\graphcond} if $G$ has a simple cycle $\bm{C}$ for all pairs of adjacent nodes and $|\bm{C}| \geq 3$.
 \label{def:pibt-cond}
\end{definition}
\begin{theorem}
 If $G$ is \graphcond, PIBT lets all agents reach their own destination within $\text{diam}(G)|A|$ timesteps after the destinations are given.
 \label{theorem:pibt}
\end{theorem}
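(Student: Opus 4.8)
The plan is to combine the single-step guarantee of Lemma~\ref{lemma:pibt-local-movement} with the fairness built into the dynamic priority scheme, and then convert the resulting per-agent bound into the global bound by counting over the $|A|$ agents. First I would make the dynamic priority rule precise: each agent $a_i$ carries a priority $p_i(t) = \eta_i(t) + \epsilon_i$, where $\epsilon_i \in [0,1)$ is a fixed unique tie-breaker and $\eta_i(t)$ counts the timesteps elapsed since $a_i$ last occupied its goal, resetting to $0$ whenever $a_i$ sits on its destination. Under Definition~\ref{def:pibt-cond}, every neighbor of any occupied node lies on a simple cycle of length at least three, so the hypothesis of Lemma~\ref{lemma:pibt-local-movement} is met for whichever neighbor the top-priority agent selects. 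Consequently the highest-priority agent is never forced to inherit a priority and can always advance along a shortest path to its goal; since shortest paths have length at most $\text{diam}(G)$, the top-priority agent reaches its destination within $\text{diam}(G)$ timesteps of becoming top priority.

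Next I would argue fairness: as long as an agent is away from its goal its $\eta$ strictly increases each step, whereas any agent resting on its goal has $\eta = 0$, so an agent perpetually denied its goal would eventually attain the strictly largest priority, at which point the preceding observation forces it to its goal, a contradiction establishing that every agent reaches its destination. To upgrade this finiteness into the quantitative bound $\text{diam}(G)\,|A|$, I would set up a counting argument over disjoint time windows of length $\text{diam}(G)$: relabel the agents by their priority at the moment destinations are issued, and show by induction on $k$ that by the end of the $k$-th window the $k$ highest-priority agents have each reached their goals, using the Lemma to dispatch the current leader within one window and the priority reset to demote an agent once it arrives.

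The hard part will be controlling the interaction between priority inheritance and agents that already sit on their goals: because an arrived agent has $\eta = 0$ and therefore minimal priority, it can be pushed off its destination when a later leader's shortest path runs through that node, which threatens the monotonicity needed for the clean $k$-per-window induction. I would close this gap by exploiting that a displaced agent's priority immediately begins to climb again, and that the backtracking mechanism guarantees the displacing leader simultaneously occupies the vacated goal, so each displacement strictly advances a distinct higher-priority agent toward completion; bounding the number of such advancements by $|A|$ then yields the stated $\text{diam}(G)\,|A|$ ceiling.
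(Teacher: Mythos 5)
Your outline reconstructs what is essentially the standard argument for Theorem~\ref{theorem:pibt}. Note that this paper does not actually prove the theorem---it is imported from the original PIBT paper---and the proof there is exactly your two-step scheme: Lemma~\ref{lemma:pibt-local-movement} together with the \graphcond{} condition lets the current top-priority agent step to an arbitrary neighbor at every timestep, hence traverse a shortest path of length at most $\text{diam}(G)$; the relative order of the $\eta$-values of agents that have not yet arrived is preserved over time, while an arrived agent's priority resets, so at least one further agent is retired in each window of $\text{diam}(G)$ steps, giving $\text{diam}(G)|A|$ in total. Up to that point your proposal is correct.

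Your last paragraph, however, manufactures an obstruction that is not there and then patches it with a claim the algorithm does not support. The theorem only asserts that each agent \emph{reaches} its destination within the stated bound; the paper explicitly remarks right after the theorem that agents are not guaranteed to be on their goals simultaneously. So the induction hypothesis should be ``by the end of window $k$, at least $k$ agents have visited their goals at least once,'' and displacement of an already-arrived agent is simply irrelevant to it: a displaced agent's $\eta$ restarts strictly later than that of any never-arrived agent, so it can never overtake the current leader, and it need not return to its goal within the bound for the theorem to hold. Your proposed repair---that ``the displacing leader simultaneously occupies the vacated goal''---is moreover not something backtracking guarantees: the agent that ends up on the vacated node is whichever agent in the priority-inheritance chain requested it, not necessarily the leader. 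Drop that paragraph, state the induction in terms of first arrivals, and the argument closes cleanly.
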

Examples of \graphcond graphs are undirected biconnected or directed rings.
Note that the above theorem does not say \textit{complete} for classical MAPF, i.e., it does not ensure that all agents are on their goals \emph{simultaneously}.

\subsection{\Safe Condition}
Assume two paths for agents $a_i, a_j$ with different lengths, and
let the corresponding last timesteps of those two paths be $t_i$ and $t_j$ such that $t_i < t_j$.
Assume that no agents collide until $t_i$.
Unless agents vanish after they reach their goals, $a_i$ has to plan its extra path by $t_j$ since two agents potentially collide at some timestep $t$,  $t_i < t \leq t_j$.
However, $a_i$ does not need to compute the extra path immediately if $a_j$ does not use the last node of paths for $a_i$.
This is because the shorter path can be extended so as not to collide with the longer path, i.e. by staying at the last node, meaning that $a_i$ can compute its extra path on demand.
We now define these concepts clearly.

We define a sequence of nodes $\pi_{i}$ as a determined path of an agent $a_{i}$.
Initially, $\pi_{i}$ only contains $v_{i}(0)$.
The manipulation to $\pi_{i}$ only allows to append the latest assigned node.
We use $\ell_{i}$ as the timestep which corresponds to the latest added node to $\pi_{i}$.
Note that $\pi_{i} = \left( v_{i}(0),\dots,v_{i}(\ell_{i}) \right)$ and $\ell_{i} = |\pi_{i}| - 1$ from those definition.
The list of paths of all agents $A$ is denoted by $\bm{\pi}$.
\begin{definition}
 Given two paths $\pi_{i}, \pi_{j}$ and assume that $\ell_{i} \leq \ell_{j}$.
 $\pi_{i}$ and $\pi_{j}$ are \isolated when:
 \begin{align*}
  &v_{i}(t) \neq v_{j}(t), & 0 \leq t \leq \ell_{i}
  \\
  &v_{i}(t) \neq v_{j}(t-1) \land v_{i}(t-1) \neq v_{j}(t), & 0 < t \leq \ell_{i}
  \\
  &v_{i}(\ell_{i}) \neq v_{j}(t), & \ell_{i} + 1 \leq t \leq \ell_{j}
 \end{align*}
\end{definition}
\begin{definition}
  If all pairs of paths are \isolated, $\bm{\pi}$ is \safe.
\end{definition}

\begin{figure*}[t]
 \centering
 \begin{tikzpicture}
    \coordinate[](0-0) at (0, 0);
    \coordinate[above=0.9cm of 0-0](0-1);
    \coordinate[above=0.9cm of 0-1](0-2);
    \coordinate[above=0.9cm of 0-2](0-3);
    \coordinate[above=0.9cm of 0-3](0-4);
    \coordinate[above=0.9cm of 0-4](0-5);
    \coordinate[right=0.9cm of 0-0](1-0);
    \coordinate[above=0.9cm of 1-0](1-1);
    \coordinate[above=0.9cm of 1-1](1-2);
    \coordinate[above=0.9cm of 1-2](1-3);
    \coordinate[above=0.9cm of 1-3](1-4);
    \coordinate[above=0.9cm of 1-4](1-5);
    \coordinate[right=0.9cm of 1-0](2-0);
    \coordinate[above=0.9cm of 2-0](2-1);
    \coordinate[above=0.9cm of 2-1](2-2);
    \coordinate[above=0.9cm of 2-2](2-3);
    \coordinate[above=0.9cm of 2-3](2-4);
    \coordinate[above=0.9cm of 2-4](2-5);
    \coordinate[right=0.9cm of 2-0](3-0);
    \coordinate[above=0.9cm of 3-0](3-1);
    \coordinate[above=0.9cm of 3-1](3-2);
    \coordinate[above=0.9cm of 3-2](3-3);
    \coordinate[above=0.9cm of 3-3](3-4);
    \coordinate[above=0.9cm of 3-4](3-5);
    \coordinate[right=0.9cm of 3-0](4-0);
    \coordinate[above=0.9cm of 4-0](4-1);
    \coordinate[above=0.9cm of 4-1](4-2);
    \coordinate[above=0.9cm of 4-2](4-3);
    \coordinate[above=0.9cm of 4-3](4-4);
    \coordinate[above=0.9cm of 4-4](4-5);
    \coordinate[right=0.9cm of 4-0](5-0);
    \coordinate[above=0.9cm of 5-0](5-1);
    \coordinate[above=0.9cm of 5-1](5-2);
    \coordinate[above=0.9cm of 5-2](5-3);
    \coordinate[above=0.9cm of 5-3](5-4);
    \coordinate[above=0.9cm of 5-4](5-5);
    \coordinate[right=0.9cm of 5-0](6-0);
    \coordinate[above=0.9cm of 6-0](6-1);
    \coordinate[above=0.9cm of 6-1](6-2);
    \coordinate[above=0.9cm of 6-2](6-3);
    \coordinate[above=0.9cm of 6-3](6-4);
    \coordinate[above=0.9cm of 6-4](6-5);
    \coordinate[right=0.9cm of 6-0](7-0);
    \coordinate[above=0.9cm of 7-0](7-1);
    \coordinate[above=0.9cm of 7-1](7-2);
    \coordinate[above=0.9cm of 7-2](7-3);
    \coordinate[above=0.9cm of 7-3](7-4);
    \coordinate[above=0.9cm of 7-4](7-5);
    \coordinate[right=0.9cm of 7-0](8-0);
    \coordinate[above=0.9cm of 8-0](8-1);
    \coordinate[above=0.9cm of 8-1](8-2);
    \coordinate[above=0.9cm of 8-2](8-3);
    \coordinate[above=0.9cm of 8-3](8-4);
    \coordinate[above=0.9cm of 8-4](8-5);
    \coordinate[right=0.9cm of 8-0](9-0);
    \coordinate[above=0.9cm of 9-0](9-1);
    \coordinate[above=0.9cm of 9-1](9-2);
    \coordinate[above=0.9cm of 9-2](9-3);
    \coordinate[above=0.9cm of 9-3](9-4);
    \coordinate[above=0.9cm of 9-4](9-5);
    \coordinate[right=0.9cm of 9-0] (10-0);
    \coordinate[above=0.9cm of 10-0](10-1);
    \coordinate[above=0.9cm of 10-1](10-2);
    \coordinate[above=0.9cm of 10-2](10-3);
    \coordinate[above=0.9cm of 10-3](10-4);
    \coordinate[above=0.9cm of 10-4](10-5);
    \coordinate[right=0.9cm of 10-0](11-0);
    \coordinate[above=0.9cm of 11-0](11-1);
    \coordinate[above=0.9cm of 11-1](11-2);
    \coordinate[above=0.9cm of 11-2](11-3);
    \coordinate[above=0.9cm of 11-3](11-4);
    \coordinate[above=0.9cm of 11-4](11-5);
    \coordinate[right=0.9cm of 11-0](12-0);
    \coordinate[above=0.9cm of 12-0](12-1);
    \coordinate[above=0.9cm of 12-1](12-2);
    \coordinate[above=0.9cm of 12-2](12-3);
    \coordinate[above=0.9cm of 12-3](12-4);
    \coordinate[above=0.9cm of 12-4](12-5);
    \coordinate[right=0.9cm of 12-0](13-0);
    \coordinate[above=0.9cm of 13-0](13-1);
    \coordinate[above=0.9cm of 13-1](13-2);
    \coordinate[above=0.9cm of 13-2](13-3);
    \coordinate[above=0.9cm of 13-3](13-4);
    \coordinate[above=0.9cm of 13-4](13-5);
    \coordinate[right=0.9cm of 13-0](14-0);
    \coordinate[above=0.9cm of 14-0](14-1);
    \coordinate[above=0.9cm of 14-1](14-2);
    \coordinate[above=0.9cm of 14-2](14-3);
    \coordinate[above=0.9cm of 14-3](14-4);
    \coordinate[above=0.9cm of 14-4](14-5);
    \coordinate[right=0.9cm of 14-0](15-0);
    \coordinate[above=0.9cm of 15-0](15-1);
    \coordinate[above=0.9cm of 15-1](15-2);
    \coordinate[above=0.9cm of 15-2](15-3);
    \coordinate[above=0.9cm of 15-3](15-4);
    \coordinate[above=0.9cm of 15-4](15-5);
    \coordinate[right=0.9cm of 15-0](16-0);
    \coordinate[above=0.9cm of 16-0](16-1);
    \coordinate[above=0.9cm of 16-1](16-2);
    \coordinate[above=0.9cm of 16-2](16-3);
    \coordinate[above=0.9cm of 16-3](16-4);
    \coordinate[above=0.9cm of 16-4](16-5);
    \coordinate[right=0.9cm of 16-0](17-0);
    \coordinate[above=0.9cm of 17-0](17-1);
    \coordinate[above=0.9cm of 17-1](17-2);
    \coordinate[above=0.9cm of 17-2](17-3);
    \coordinate[above=0.9cm of 17-3](17-4);
    \coordinate[above=0.9cm of 17-4](17-5);
    \coordinate[right=0.45cm of 1-0](1_5-5);
    \coordinate[right=0.45cm of 3-0](3_5-5);
    \coordinate[right=0.45cm of 8-0](8_5-5);
    \coordinate[right=0.45cm of 13-0](13_5-5);
  \coordinate[right=0.45cm of 1-1](1_1-5);
  \coordinate[right=0.45cm of 3-1](3_1-5);
  \coordinate[right=0.45cm of 8-1](8_1-5);
  \coordinate[right=0.45cm of 13-1](13_1-5);
  \begin{scope}
   \coordinate[left=3.5cm of 0-0](tmp-0);
   \coordinate[below=0.5cm of tmp-0](zero);
   \node[vertex,label=below:{nodes}](v2) at (zero) {$v5$};
   \node[vertex, left=0.4cm of v2] (v0) {$v4$};
   \node[vertex,above=0.4cm of v0](v1) {$v1$};
   \node[vertex,above=0.4cm of v2](v3) {$v2$};
   \node[vertex,right=0.4cm of v2](v4) {$v6$};
   \node[vertex,above=0.4cm of v4](v5) {$v3$};
   \foreach \u / \v in {v0/v1,v2/v3,v4/v5,v0/v2,v2/v4,v1/v3,v3/v5}
   \draw[line] (\u) -- (\v);
  \end{scope}
  \begin{scope}
   \coordinate[left=3.5cm of 0-0](tmp-0);
   \coordinate[below=3.1cm of tmp-0](zero);
   \node[vertex,label=below:{initial state ($t=0$)}](v2) at (zero) {};
   \node[agent0, left=0.4cm of v2,label=left:{\scriptsize \tiny{priority:} \textbf{high}}](v0) {$a_{1}$};
   \node[vertex,above=0.4cm of v0](v1) {};
   \node[agent1,above=0.4cm of v2,label=above:{\scriptsize medium-high}](v3) {$a_{2}$};
   \node[agent3,right=0.4cm of v2,label=right:{\scriptsize low}](v4) {$a_{4}$};
   \node[agent2,above=0.4cm of v4,label=right:{\scriptsize medium-low}](v5) {$a_{3}$};
   \foreach \u / \v in {v0/v1,v2/v3,v4/v5,v0/v2,v2/v4,v1/v3,v3/v5}
   \draw[line] (\u) -- (\v);
  \end{scope}
  \begin{scope}
   \coordinate[left=3.5cm of 0-0](tmp-0);
   \coordinate[below=5.5cm of tmp-0](zero);
   \node[agent1,label=below:{goal state}](v2) at (zero) {$a_{2}$};
   \node[vertex, left=0.4cm of v2](v0) {};
   \node[agent3,above=0.4cm of v0](v1) {$a_{4}$};
   \node[vertex,above=0.4cm of v2](v3) {};
   \node[agent2,right=0.4cm of v2](v4) {$a_{3}$};
   \node[agent0,above=0.4cm of v4](v5) {$a_{1}$};
   \foreach \u / \v in {v0/v1,v2/v3,v4/v5,v0/v2,v2/v4,v1/v3,v3/v5}
   \draw[line] (\u) -- (\v);
  \end{scope}
    \begin{scope} 
     \coordinate[above=0.5cm of 1-0](time-s);
     \coordinate[below=6.75cm of time-s](time-g);
     \draw[line,->](time-s)--(time-g);
     \node[label=left:{$t=0$},below=0.55cm of time-s](t0) {\textbf{-}};
     \node[label=left:{$t=1$},below=2.3cm of time-s](t1) {\textbf{-}};
     \node[label=left:{$t=2$},below=4.05cm of time-s](t2) {\textbf{-}};
     \node[label=left:{$t=3$},below=5.8cm of time-s](t3) {\textbf{-}};
     \node[label=left:{(for all)},above=0.2cm of t0]() {};
     \coordinate[above=0.5cm of 5-0](phase-1-s);
     \coordinate[below=6.75cm of phase-1-s](phase-1-g);
     \draw[line,dashed](phase-1-s)--(phase-1-g);
     \coordinate[above=0.5cm of 9-0](phase-2-s);
     \coordinate[below=6.75cm of phase-2-s](phase-2-g);
     \draw[line,dashed](phase-2-s)--(phase-2-g);
    \end{scope}
    \begin{scope}  
     \coordinate[right=0.05cm of 3-0](3_5-0);
     \node[vertex,below=0.4cm of 3_5-0](v2-0) {};
     \node[agent0, left=0.4cm of v2-0](v0-0){$a_{1}$};
     \node[vertex,above=0.4cm of v0-0](v1-0) {};
     \node[agent1,above=0.4cm of v2-0](v3-0) {$a_{2}$};
     \node[agent3,right=0.4cm of v2-0](v4-0) {$a_{4}$};
     \node[agent2,above=0.4cm of v4-0](v5-0) {$a_{3}$};
     \node[dest](d0) at (v2-0) {};
     \foreach \u / \v in {v0-0/v1-0,v2-0/v3-0,v4-0/v5-0,v0-0/v2-0,v2-0/v4-0,v1-0/v3-0,v3-0/v5-0}
     \draw[line] (\u) -- (\v);
     \draw[pi] (v0-0)--(d0);
     \node[above=-0.05cm of v3-0](){$\ell_1=0$};
     \node[vertex,below=1.15cm of v0-0](v0-1) {};
     \node[vertex,  above=0.4cm of v0-1](v1-1) {};
     \node[agent0-d,right=0.4cm of v0-1](v2-1) {$a_{1}$};
     \node[vertex,  above=0.4cm of v2-1](v3-1) {};
     \node[vertex,  right=0.4cm of v2-1](v4-1) {};
     \node[vertex,  above=0.4cm of v4-1](v5-1) {};
     \foreach \u / \v in {v0-1/v1-1,v2-1/v3-1,v4-1/v5-1,v0-1/v2-1,v2-1/v4-1,v1-1/v3-1,v3-1/v5-1}
     \draw[line] (\u) -- (\v);
     \node[vertex,below=1.15cm of v0-1](v0-2) {};
     \node[vertex,above=0.4cm of v0-2](v1-2) {};
     \node[vertex,right=0.4cm of v0-2](v2-2) {};
     \node[vertex,above=0.4cm of v2-2](v3-2) {};
     \node[vertex,right=0.4cm of v2-2](v4-2) {};
     \node[vertex,above=0.4cm of v4-2](v5-2) {};
     \foreach \u / \v in {v0-2/v1-2,v2-2/v3-2,v4-2/v5-2,v0-2/v2-2,v2-2/v4-2,v1-2/v3-2,v3-2/v5-2}
     \draw[line] (\u) -- (\v);
     \node[vertex,below=1.15cm of v0-2](v0-3) {};
     \node[vertex,above=0.4cm of v0-3](v1-3) {};
     \node[vertex,right=0.4cm of v0-3](v2-3) {};
     \node[vertex,above=0.4cm of v2-3](v3-3) {};
     \node[vertex,right=0.4cm of v2-3](v4-3) {};
     \node[vertex,above=0.4cm of v4-3](v5-3) {};
     \foreach \u / \v in {v0-3/v1-3,v2-3/v3-3,v4-3/v5-3,v0-3/v2-3,v2-3/v4-3,v1-3/v3-3,v3-3/v5-3}
     \draw[line] (\u) -- (\v);
    \end{scope}
    \begin{scope}  
     \coordinate[right=0.05cm of 7-0](8_5-0);
     \node[vertex,below=0.4cm of 8_5-0](v2-0) {};
     \node[agent0, left=0.4cm of v2-0](v0-0){$a_{1}$};
     \node[vertex,above=0.4cm of v0-0](v1-0) {};
     \node[agent1,above=0.4cm of v2-0](v3-0) {$a_{2}$};
     \node[agent3,right=0.4cm of v2-0](v4-0) {$a_{4}$};
     \node[agent2,above=0.4cm of v4-0](v5-0) {$a_{3}$};
     \node[dest](d3) at (v5-0) {};
     \node[dest](d2) at (v3-0) {};
     \node[dest](d1) at (v1-0) {};
     \foreach \u / \v in {v0-0/v1-0,v2-0/v3-0,v4-0/v5-0,v0-0/v2-0,v2-0/v4-0,v1-0/v3-0,v3-0/v5-0}
     \draw[line] (\u) -- (\v);
     \draw[pi](v4-0)--(d3);
     \draw[pi](v5-0)--(d2);
     \draw[pi](v3-0)--(d1);
     \node[above=-0.05cm of v3-0](){$\ell_1=1$};
     \node[vertex,below=1.15cm of v0-0](v0-1) {};
     \node[agent1-d,above=0.4cm of v0-1](v1-1) {$a_{2}$};
     \node[agent0,  right=0.4cm of v0-1](v2-1) {$a_{1}$};
     \node[agent2-d,above=0.4cm of v2-1](v3-1) {$a_{3}$};
     \node[vertex,  right=0.4cm of v2-1](v4-1) {};
     \node[agent3-d,above=0.4cm of v4-1](v5-1) {$a_{4}$};
     \node[dest](d0) at (v4-1) {};
     \foreach \u / \v in {v0-1/v1-1,v2-1/v3-1,v4-1/v5-1,v0-1/v2-1,v2-1/v4-1,v1-1/v3-1,v3-1/v5-1}
     \draw[line] (\u) -- (\v);
     \draw[pi](v2-1)--(d0);
     \node[vertex,below=1.15cm of v0-1](v0-2) {};
     \node[vertex,  above=0.4cm of v0-2](v1-2) {};
     \node[vertex,  right=0.4cm of v0-2](v2-2) {};
     \node[vertex,  above=0.4cm of v2-2](v3-2) {};
     \node[agent0-d,right=0.4cm of v2-2](v4-2) {$a_{1}$};
     \node[vertex,  above=0.4cm of v4-2](v5-2) {};
     \foreach \u / \v in {v0-2/v1-2,v2-2/v3-2,v4-2/v5-2,v0-2/v2-2,v2-2/v4-2,v1-2/v3-2,v3-2/v5-2}
     \draw[line] (\u) -- (\v);
     \node[vertex,below=1.15cm of v0-2](v0-3) {};
     \node[vertex,above=0.4cm of v0-3](v1-3) {};
     \node[vertex,right=0.4cm of v0-3](v2-3) {};
     \node[vertex,above=0.4cm of v2-3](v3-3) {};
     \node[vertex,right=0.4cm of v2-3](v4-3) {};
     \node[vertex,above=0.4cm of v4-3](v5-3) {};
     \foreach \u / \v in {v0-3/v1-3,v2-3/v3-3,v4-3/v5-3,v0-3/v2-3,v2-3/v4-3,v1-3/v3-3,v3-3/v5-3}
     \draw[line] (\u) -- (\v);
    \end{scope}
    \begin{scope}
     \coordinate[right=0.05cm of 11-0](13_5-0);
     \node[vertex,below=0.4cm of 13_5-0](v2-0) {};
     \node[agent0, left=0.4cm of v2-0](v0-0){$a_{1}$};
     \node[vertex,above=0.4cm of v0-0](v1-0) {};
     \node[agent1,above=0.4cm of v2-0](v3-0) {$a_{2}$};
     \node[agent3,right=0.4cm of v2-0](v4-0) {$a_{4}$};
     \node[agent2,above=0.4cm of v4-0](v5-0) {$a_{3}$};
     \foreach \u / \v in {v0-0/v1-0,v2-0/v3-0,v4-0/v5-0,v0-0/v2-0,v2-0/v4-0,v1-0/v3-0,v3-0/v5-0}
     \draw[line] (\u) -- (\v);
     \node[above=-0.05cm of v3-0](){$\ell_1=2$};
     \node[vertex,below=1.15cm of v0-0](v0-1) {};
     \node[agent1,above=0.4cm of v0-1](v1-1) {$a_{2}$};
     \node[agent0,right=0.4cm of v0-1](v2-1) {$a_{1}$};
     \node[agent2,above=0.4cm of v2-1](v3-1) {$a_{3}$};
     \node[vertex,right=0.4cm of v2-1](v4-1) {};
     \node[agent3,above=0.4cm of v4-1](v5-1) {$a_{4}$};
     \node[dest](d3) at (v3-1) {};
     \node[dest](d2) at (v2-1) {};
     \foreach \u / \v in {v0-1/v1-1,v2-1/v3-1,v4-1/v5-1,v0-1/v2-1,v2-1/v4-1,v1-1/v3-1,v3-1/v5-1}
     \draw[line] (\u) -- (\v);
     \draw[pi](v5-1)--(d3);
     \draw[pi](v3-1)--(d2);
     \node[vertex,below=1.15cm of v0-1](v0-2) {};
     \node[vertex,  above=0.4cm of v0-2](v1-2) {};
     \node[agent2-d,right=0.4cm of v0-2](v2-2) {$a_{3}$};
     \node[agent3-d,above=0.35cm of v2-2](v3-2) {$a_{4}$};
     \node[agent0,  right=0.4cm of v2-2](v4-2) {$a_{1}$};
     \node[vertex,  above=0.4cm of v4-2](v5-2) {};
     \node[dest](d0) at (v5-2) {};
     \foreach \u / \v in {v0-2/v1-2,v2-2/v3-2,v4-2/v5-2,v0-2/v2-2,v2-2/v4-2,v1-2/v3-2,v3-2/v5-2}
     \draw[line] (\u) -- (\v);
     \draw[pi](v4-2)--(d0);
     \node[vertex,below=1.15cm of v0-2](v0-3) {};
     \node[vertex,  above=0.4cm of v0-3](v1-3) {};
     \node[vertex,  right=0.4cm of v0-3](v2-3) {};
     \node[vertex,  above=0.4cm of v2-3](v3-3) {};
     \node[vertex,  right=0.4cm of v2-3](v4-3) {};
     \node[agent0-d,above=0.4cm of v4-3](v5-3) {$a_{1}$};
     \foreach \u / \v in {v0-3/v1-3,v2-3/v3-3,v4-3/v5-3,v0-3/v2-3,v2-3/v4-3,v1-3/v3-3,v3-3/v5-3}
     \draw[line] (\u) -- (\v);
    \end{scope}
 \end{tikzpicture}
 \caption{
 Example of \winpibt.
 The left part shows the configuration.
 The right part illustrates how agents secure nodes sequentially.
 Here, the window size is three.
 Due to the space limit, we show the example that $a_{1}$ secures its path until $t=3$.
 In the right part, agents who have already secured nodes are depicted by color-filled circles.
 Requested nodes are depicted by edge-colored circles.
 We do not illustrate backtracking to avoid complexity.
 In this example, $a_{1}$ has initiative.
 First, $a_{1}$ tries to secure $v5$ at $t=1$ (first column).
 This attempt succeeds since it does not violate others locations.
 Next, $a_1$ tries to secure $v6$ at $t=2$ (second column).
 This attempt will break $\bm{\pi}$ from \safe.
 Thus, priority inheritance occurs from $a_1$ to $a_4$.
 Now $a_4$ has initiative and plans a single-step path.
 For $a_i$ such that $\ell_i = \ell_4$, this action works as same as PIBT.
 After this, identically, priority inheritance is executed between related agents.
 }
 \label{fig:winpibt-example}
\end{figure*}
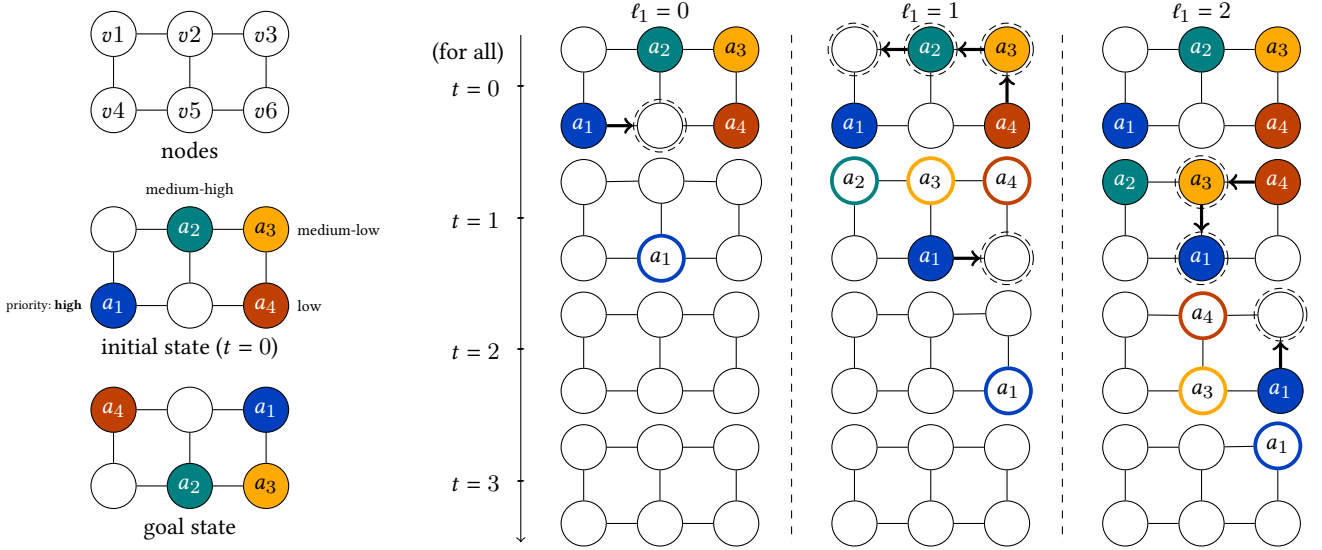

From the definition of \safe condition, it is trivial that when $\bm{\pi}$ is \safe, agents do not collide until timestep $\min \{ \ell_{i} \}$.
Moreover, a combination of extending paths exists such that agents do not ever collide.
\begin{proposition}
 If $\bm{\pi}$ is \safe, for $a_{i}$, there exists at least one additional path until any timestep $t$ ($t \geq \ell_{i}$) while keeping $\bm{\pi}$ \safe.
 \label{lemma:keep-safety}
\end{proposition}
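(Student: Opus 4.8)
The plan is to exhibit an explicit extension and verify that it preserves the \safe condition. The natural candidate is the trivial \emph{wait} extension: append to $\pi_i$ the node $v_i(\ell_i)$ repeatedly, i.e.\ set $v_i(s) = v_i(\ell_i)$ for every $s$ with $\ell_i < s \le t$, so that $a_i$ simply stays at its current last node until timestep $t$. This is a legal manipulation of $\pi_i$ (it only appends the latest assigned node), and the new last timestep becomes $t$. Intuitively it should keep $\bm{\pi}$ \safe precisely because the third inequality in the \isolated definition guarantees that no longer path ever steps onto the last node of a shorter one; staying on that node therefore cannot create a fresh vertex or swap conflict.

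To prove it, I would fix an arbitrary other agent $a_j$ and show that the extended $\pi_i$ is still \isolated with $\pi_j$, splitting on where $t$ falls relative to $\ell_j$. In \textbf{Case 1}, $t \le \ell_j$, so $a_i$ remains the shorter path and the \isolated conditions are read with $a_i$ in the shorter role. On $[0,\ell_i]$ the vertex and swap inequalities are inherited verbatim from the \safe hypothesis. On the extended range $\ell_i < s \le t$ we have $v_i(s) = v_i(s-1) = v_i(\ell_i)$, so every requirement collapses to $v_i(\ell_i) \ne v_j(s)$, which is exactly the original third inequality on $\ell_i+1 \le s \le \ell_j$ (with the original first inequality handling the seam $s-1=\ell_i$). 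The new third-clause requirement $v_i(t) \ne v_j(s)$ for $t < s \le \ell_j$ is likewise the original third inequality, since $v_i(t)=v_i(\ell_i)$.

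In \textbf{Case 2}, $t > \ell_j$, so after the extension $a_j$ becomes the shorter path and I verify the \isolated conditions with $a_j$ in the shorter role. The vertex and swap inequalities on $[0,\ell_j]$ follow either directly from the original conditions, when $\ell_j \le \ell_i$, or, when $\ell_i < \ell_j$, by combining the original first inequality on $[0,\ell_i]$ with the original third inequality on $(\ell_i,\ell_j]$, using $v_i(s)=v_i(\ell_i)$ on the tail. The decisive new requirement is the swapped third clause $v_j(\ell_j) \ne v_i(s)$ for $\ell_j < s \le t$; because $v_i(s)=v_i(\ell_i)$ as soon as $s>\ell_i$, this reduces to the single inequality $v_j(\ell_j)\ne v_i(\ell_i)$, which is granted either by the original first inequality (if $\ell_i=\ell_j$) or by the original third inequality evaluated at the endpoint of its range (if the two lengths differ).

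The main obstacle is purely bookkeeping: keeping straight which path is the ``shorter'' one before and after the extension, since the extension flips that ordering exactly when $t$ crosses $\ell_j$, and checking the swap inequality at the boundary timesteps $s=\ell_i$ and $s=\ell_j$, where the waiting segment meets the original path. Once these seams are handled, every inequality in the \isolated definition of the extended pair is seen to be a special case of one already supplied by the \safe hypothesis. Since $a_j$ was arbitrary, the extended $\bm{\pi}$ is \safe, and the claimed additional path exists.
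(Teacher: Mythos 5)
Your proposal is correct and follows exactly the paper's approach: extend $\pi_i$ by having $a_i$ wait at $v_i(\ell_i)$ until timestep $t$. The paper simply asserts that this "obviously" keeps $\bm{\pi}$ \safe, whereas you carry out the case analysis verifying each \isolated inequality; the extra bookkeeping is sound and confirms the claim.
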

\begin{proof}
  Make $a_{i}$ stay its last assigned location until timestep $t$.
  This operation obviously keeps $\bm{\pi}$ \safe.
\end{proof}

The \safe condition might be helpful to developing online solvers by regarding as temporal terminate condition.
In online situations, where goals are dynamically assigned to agents, the challenge is replanning paths on demand.
One intuitive but excessive approach is to update paths for all agents in the system until a certain timestep, e.g., Replan All~\cite{vsvancara2019online}, and this certainly ensures conflict-free.
The disentangled condition can relax this type of replanning, i.e., it enables to update paths for part of agents, and still ensure the safety.

PIBT can be understood as making an effort to keep $\bm{\pi}$ \safe.
Priority inheritance occurs when $a_{i}$ attempts to break the \isolated condition regarding $\pi_{i}$ and $\pi_{j}$, where $a_{j}$ is an agent with lower priority.
Then $a_j$ secures the next node so as to keep $\pi_i$ and $\pi_j$ are \isolated, before $a_i$ does.
In strictly, there is one exception: movements corresponding to rotations.
Assume that $a_i, a_j, a_k$ tries to move $v_j(t), v_k(t), v_i(t)$, respectively.
If $a_i$ has the highest priority, $a_j$ secures the node prior to $a_i$, and $a_k$ does prior to $a_j$.
$\pi_k$ and $\pi_i$ are not \isolated temporary, but $\bm{\pi}$ revives in \safe immediately since rotations always succeed in PIBT.
\winpibt works as same as PIBT, i.e., update paths while keeping $\bm{\pi} \safe$.
The difference is that \winpibt can perform priority inheritance retroactively.

\section{Windowed PIBT (\winpibt)}
\label{sec:algo}
In this section, we first provide a basic concept of how to extend the time window of PIBT and an example.
Then, the pseudo code is given with theoretical analysis.
We explain \winpibt in centralized fashion.
PIBT itself is a relatively realistic approach for decentralized implementation, however, \winpibt with decentralized fashion faces some difficulties as discussed later.

\subsection{Concept}
Similarly to PIBT, \winpibt makes the agent with highest priority move along an arbitrary path within a time window.
The original PIBT algorithm plans paths for all agents one by one timestep, i.e., PIBT relies on a unit-length time window.
\winpibt extends the time window of PIBT while satisfying Lemma~\ref{lemma:pibt-local-movement}.
Describing simply, the algorithm for one agent $a_{i}$ consists of three phases:
\begin{itemize}[leftmargin=0.4cm]
\item[1)] Compute a path ideal for $a_{i}$ that excludes already reserved time-node pairs while avoiding interference with the progression of higher-priority agents.
\item[2)] Secure time-node pairs sequentially along to the computed path.
\item[3)] If the node requested at $t_{i}$ is the last assigned node for some agent~$a_{j}$ at $t_{j}$ such that $t_{j} < t_{i}$, i.e., $\pi_i$ and $\pi_j$ will not be \isolated, then move $a_j$ from the node by priority inheritance.
  In precisely, let $a_j$ plan its path one step ahead by inheriting the priority of $a_i$ until there are no such agents.
  If such an agent remains until $t_i$, then $a_i$ executes the PIBT algorithm with the property of Lemma~\ref{lemma:pibt-local-movement}.
\end{itemize}

\subsubsection{Example}
Fig.~\ref{fig:winpibt-example} illustrates how \winpibt works.
To simplify, we remove the invalid case of priority inheritance.
Here, $a_1$ has the highest priority and it takes initiative.
Assume that the window size is three.
At the beginning, $a_1$ computes the ideal path $(v4, v5, v6, v3)$ and starts securing nodes.
$v5$ at $t=1$ can be regarded as ``unoccupied'' since the last allocated nodes for the other agents are $v2$, $v3$ and $v6$.
Thus, $a_{1}$ secures $v5$ at $t=1$.
Next, $a_{1}$ tries to secure $v6$ at $t=2$ that is the last assigned node of $a_{4}$, i.e., $v_{4}(\ell_{4})$.
$a_{1}$ has to compel $a_{4}$ to move from $v6$ before $t=2$ and priority inheritance occurs between different timesteps (from $a_{1}$ at $t=1$ to $a_{4}$ at $t=0$).
This inheritance process continues until $a_{2}$ secures the node via $a_{3}$ and $a_{4}$, just like in PIBT.
Now $a_{2}$, $a_{3}$ and $a_{4}$ secure the nodes until $t=1$.
This causes $v6$ at $t=2$ to become ``unoccupied'' and hence $a_{1}$ successfully secures the desired node.
The above process continues until the initiative agent reserves the nodes at the current timestep ($t=0$) plus the window size (3).
After $a_{1}$ finishes reservation, $a_{2}$ now starts reservation from $t=2$ avoiding the already secured node, e.g., $v2$ at $t=2$ cannot be used since it is already assigned to $a_4$ (to make space for $a_1$).
Finally, \winpibt gives the paths as follows:
\begin{itemize}
\item $\pi_{1}$: $(v4, v5, v6, v3)$
\item $\pi_{2}$: $(v2, v1, v4, v5)$
\item $\pi_{3}$: $(v3, v2, v5, v6)$
\item $\pi_{4}$: $(v6, v3, v2, v1)$
\end{itemize}

\subsection{Algorithm}
We show pseudo code of \winpibt in Algorithm~\ref{algo:func-winpibt} and \ref{algo:caller}.
The former describes function $\mathsf{winPIBT}$ that gives $a_{i}$ a path until $t=\alpha$.
The latter shows how to call function $\mathsf{winPIBT}$ globally.
\winpibt has a recursive structure with respect to priority inheritance and backtracking similarly to PIBT.

Function $\mathsf{winPIBT}$ takes four arguments:
1)~$a_{i}$ is an agent determining its own path;
2)~$\alpha$ is timestep until which $a_{i}$ secures nodes, i.e., $\ell_i = \alpha$ after calling function $\mathsf{winPIBT}$;
3)~$\bm{\Pi}$ represents provisional paths of all agents.
Each agent plans its own path while referring to $\bm{\Pi}$.
We denote by $\Pi_{i}$ a provisional path of $a_{i}$ and $\Pi_{i}(t)$ a node at timestep $t$ in $\Pi_{i}$.
Intuitively, $\Pi_{i}$ consists of connecting an already determined path $\pi_{i}$ and a path trying to secure.
Note that $0 \leq t \leq \ell_{i}$, $\Pi_{i}(t) = v_{i}(t)$;
4)~$R$ is a set of agents which are currently requesting some nodes, aiming at detecting rotations.
In the pseudo code, we also implicitly use $\ell_{j}$, which is not contained in arguments.

We use three functions:
1-2) $\mathsf{validPath}(a_{i}, \beta, \bm{\Pi})$ and $\mathsf{registerPath}$
$(a_{i}, \alpha, \beta, \bm{\Pi})$ compute a path for $a_{i}$.
The former confirms whether there exists a path for $a_{i}$ such that keeps $\bm{\pi}$ \safe from timestep $\ell_{i} + 1$ to $\beta$.
The latter computes the ideal path until timestep $\beta$ and registers it to $\bm{\Pi}$ until $t=\alpha$.
We assume that always $\alpha \leq \beta$.
In addition to prohibiting collisions, $\Pi_{i}$ is constrained by the following term;
$\Pi_{i}(t_{i}) \neq \Pi_{j}(t_{j}), \ell_{i} < t_{i} < \ell_{j}, t_{i} < t_{j} \leq \ell_{j}$.
Intuitively, this constraint says that the shorter path cannot invade the longer path.
The rationale is to keep $\bm{\Pi}$ \safe.
In \winpibt, a path is elongated by adding nodes one by one to its end.
Assume two paths with different lengths.
The \safe condition is broken in two cases: The longer path adds the last node of the shorter path to its end, or, the shorter path adds the node that the longer path uses in the gap term between two paths.
The constraint prohibits the latter case.
A critical example, shown in Fig.~\ref{fig:winpibt-reservation:cannotenter},
assumes the following situation:
After $a_1$ has fixed its path, $a_2$ starts securing nodes.
To do so, $a_2$ tries to secure the current location of $a_3$, thus, $a_3$ has to plan its path.
What happens when $a_3$ plans to use the crossing node to avoid a temporal collision with $a_2$?
The problem is that $a_3$ is not ensured to return to its first location since $a_2$ has a higher priority.
Thus, an agent with a lower priority has the potential to be stuck on the way of a path of an agent with higher priority without this constraint.
According to this constraint, $a_3$ cannot use a crossing node until $a_1$ passes.
This can cause some problematic cases as shown in Fig.~\ref{fig:winpibt-reservation:inconvenient}, which implies that \emph{extra reservation leads to awkward path planning}.
3) $\mathsf{copeStuck}(a_{i}, \alpha, \bm{\Pi})$ is called when $a_{i}$ has no path satisfying the constraints.
This forcibly gives a path to $a_{i}$ such that staying at the last assigned node until timestep $\alpha$, i.e., $v_{i}(\ell_{i}+1),\dots,v_{i}(\alpha) \leftarrow v_{i}(\ell_{i})$.
Note that this function also updates $\Pi_i$ such that $\Pi_i = \pi_i$.

Algorithm~\ref{algo:func-winpibt} is as follows.
An agent $a_{i}$ enters a path decision phase when function $\mathsf{winPIBT}$ is called with the first argument $a_{i}$.
Firstly, it checks that the timestep when the last node was assigned to $a_{i}$ is smaller than $\alpha$, or else the path of $a_{i}$ has already been determined over $t=\alpha$, thus $\mathsf{winPIBT}$ returns as valid [Line~\ref{algo:func-winpibt:init-check}].
Next, it computes the prophetic timestep $\beta$ [Line~\ref{algo:func-winpibt:calc-beta}].
The rationale of $\beta$ is that,
unless $a_{i}$ sends backtracking, the provisional paths in $\bm{\Pi}$
that may affect the planning of $a_i$ never change.
Thus, computing a path based on an upper timestep $\beta$ works akin to forecasting.
If no valid path exist, $a_{i}$ is forced to stay at $v_{i}(\ell_{i})$ until $t=\alpha$ via the function $\mathsf{copingStuck}$ and backtracks as invalid [Line~\ref{algo:func-winpibt:validpath-1}--\ref{algo:func-winpibt:end-valid-path-1}].
A similar operation is executed when $a_{i}$ recomputes its path [Line~\ref{algo:func-winpibt:validpath-2}--\ref{algo:func-winpibt:invalid-2}].
After that, $\mathsf{winPIBT}$ proceeds: 
1)~Compute an ideal path for $a_{i}$ satisfying the constraints [Line~\ref{algo:func-winpibt:register1}, \ref{algo:func-winpibt:register2}];
2)~Secure time-node pairs sequentially along path $\Pi_{i}$ [Line~\ref{algo:func-winpibt:target-node}, \ref{algo:func-winpibt:secure-node}];
3)~If the requesting node $v$ is violating a path of $a_{j}$, let $a_{j}$ leave $v$ by $t=\ell_{i}-1$ via priority inheritance [Line~\ref{algo:func-winpibt:force-agent}--\ref{algo:func-winpibt:end-force-agent}].
If any agent $a_{j}$ remains at $t=\ell_{i}$, then the original PIBT works [Line~\ref{algo:func-winpibt:require-pibt}--\ref{algo:func-winpibt:end-force}].
Note that introducing $R$ prevents eternal priority inheritance and enables rotations.
\begin{algorithm}[t]
 {\small
 \caption{function $\mathsf{winPIBT}$}
 \label{algo:func-winpibt}
 \begin{algorithmic}
  \Require
  \State $a_{i}$ : an agent trying to secure nodes.
  \State $\alpha$ : timestep by which $a_{i}$ secures. After calling, $\ell_{i} = \alpha$.
  \State $\bm{\Pi}$ : provisional paths.
  \State $R$: agents that are currently requesting some nodes.
  \Ensure $\{\text{valid}, \text{invalid}\}$
 \end{algorithmic}
 \vspace{0.01cm}
 \begin{algorithmic}[1]
  \Function{$\mathsf{winPIBT}$}
  {$a_{i}, \alpha, \bm{\Pi}$, $R$}
  \IfSingle{$\ell_{i} \geq \alpha$}{\Return valid}
  \label{algo:func-winpibt:init-check}
  \State $\beta \leftarrow \max(\alpha, \text{max timestep registered in } \bm{\Pi})$
  \label{algo:func-winpibt:calc-beta}
  \If{$\nexists \mathsf{validPath}(a_{i}, \beta, \bm{\Pi})$}
  \label{algo:func-winpibt:validpath-1}
  \State $\mathsf{copeStuck}(a_{i}, \alpha, \bm{\Pi})$
  \label{algo:func-winpibt:copestuck-1}
  \State \Return invalid
  \label{algo:func-winpibt:invalid-1}
  \EndIf
  \label{algo:func-winpibt:end-valid-path-1}
  \State $\mathsf{registerPath}(a_{i}, \alpha, \beta, \bm{\Pi})$
  \label{algo:func-winpibt:register1}
  \State $R \leftarrow R \cup \{ a_{i} \}$
  \label{algo:func-winpibt:init-path}
  \While{$\ell_{i} < \alpha$}
  \label{algo:func-winpibt:while-start}
  \State $v \leftarrow \Pi_{i}(\ell_{i} + 1)$
  \Comment target node
  \label{algo:func-winpibt:target-node}
  \While{$\exists a_{j}$ s.t. $\ell_{j} < \ell_{i}, v_{j}(\ell_{j}) = v$}
  \label{algo:func-winpibt:force-agent}
  \State $\mathsf{winPIBT} (a_{j}, \ell_{j} + 1, \bm{\Pi}, R)$
  \label{algo:func-winpibt:let-aj-force}
  \EndWhile
  \label{algo:func-winpibt:end-force-agent}
  \If{$\exists a_{j}$ s.t. $\ell_{j} = \ell_{i}, v_{j}(\ell_{j}) = v, a_{j} \not\in R$}
  \label{algo:func-winpibt:require-pibt}
  \If{$\mathsf{winPIBT} (a_{j}, \ell_{j} + 1, \bm{\Pi}, R)$ \small{is invalid}}
  \label{algo:func-winpibt:pibt}
  \State revoke nodes $\Pi_{i}(\ell_{i} + 1), \dots, \Pi_{i}(\alpha)$
  \If{$\nexists \mathsf{validPath}(a_{i}, \beta, \bm{\Pi})$}
  \label{algo:func-winpibt:validpath-2}
  \State $\mathsf{copeStuck}(a_{i}, \alpha, \bm{\Pi})$
  \label{algo:func-winpibt:copestuck-2}
  \State \Return invalid
  \label{algo:func-winpibt:invalid-2}
  \Else
  \State $\mathsf{registerPath}(a_{i}, \alpha, \beta, \bm{\Pi})$
  \label{algo:func-winpibt:register2}
  \State \textbf{continue}
  \EndIf
  \label{algo:func-winpibt:end-recompute}
  \EndIf
  \EndIf
  \label{algo:func-winpibt:end-force}
  \State $v_{i}(\ell_{i} + 1) \leftarrow v$
  \Comment $\ell_{i}$ is implicitly incremented
  \label{algo:func-winpibt:secure-node}
  \EndWhile
  \label{algo:func-winpibt:endwhile}
  \State \Return valid
  \label{algo:func-winpibt:bt-valid}
  \EndFunction
 \end{algorithmic}
}
\end{algorithm}
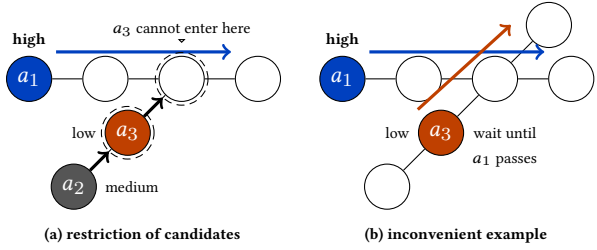
\begin{figure}[t]
 \centering
 \begin{tabular}{cc}
  \begin{minipage}[t]{0.45\hsize}
   \centering
   \begin{tikzpicture}
    \node[agenthigh,label=above:{\scriptsize \textbf{high}}](v0) at (0,0) {$a_1$};
    \node[vertex,right=0.4cm of v0](v1) {};
    \node[vertex,right=0.4cm of v1](v2) {};
    \node[vertex,right=0.4cm of v2](v3) {};
    \node[agentlow,below left=0.4cm of v2,label=left:{\scriptsize low}](v5) {$a_3$};
    \node[agent,below left=0.4cm of v5,label=right:{\scriptsize medium}](v6) {$a_2$};
    \foreach \u / \v in {v0/v1,v1/v2,v2/v3,v2/v5,v5/v6}
    \draw[line] (\u) -- (\v);
    \coordinate[above right=0.2cm of v0](a0-s);
    \coordinate[above left =0.2cm of v3](a0-g);
    \draw[move-high](a0-s)--(a0-g);
    \node[dest](d2) at (v2) {};
    \node[dest](d1) at (v5) {};
    \draw[pi](v6)--(d1);
    \draw[pi](v5)--(d2);
    \node[above=0.15cm of v2](col) {\scriptsize $a_3$ cannot enter here};
    \coordinate[above=0.18cm of v2](tri1);
    \coordinate[above left=0.05cm of tri1](tri2);
    \coordinate[above right=0.05cm of tri1](tri3);
    \draw[line](tri1)--(tri2)--(tri3)--cycle;
   \end{tikzpicture}
   \subcaption{restriction of candidates}
   \label{fig:winpibt-reservation:cannotenter}
  \end{minipage}
  &
  \begin{minipage}[t]{0.45\hsize}
   \centering
   \begin{tikzpicture}
    \node[agenthigh,label=above:{\scriptsize \textbf{high}}](v0) at (0,0) {$a_1$};
    \node[vertex,right=0.4cm of v0](v1) {};
    \node[vertex,right=0.4cm of v1](v2) {};
    \node[vertex,right=0.4cm of v2](v3) {};
    \node[vertex,above right=0.4cm of v2](v4) {};
    \node[agentlow,below left=0.4cm of v2,label=left:{\scriptsize low}](v5) {$a_3$};
    \node[vertex,below left=0.4cm of v5](v6) {};
    \foreach \u / \v in {v0/v1,v1/v2,v2/v3,v2/v4,v2/v5,v5/v6}
    \draw[line] (\u) -- (\v);
    \coordinate[above right=0.2cm of v0](a0-s);
    \coordinate[above left =0.2cm of v3](a0-g);
    \draw[move-high](a0-s)--(a0-g);
    \coordinate[above left=0.13cm of v5](a2-s);
    \coordinate[left=0.2cm of v4](a2-g);
    \draw[move-low](a2-s)--(a2-g);
    \node[right=0.01cm of v5](a2-wait-1) {\scriptsize wait until};
    \node[below=0.cm of a2-wait-1](a2-wait-2) {\scriptsize $a_1$ passes};
   \end{tikzpicture}
   \subcaption{inconvenient example}
   \label{fig:winpibt-reservation:inconvenient}
  \end{minipage}
 \end{tabular}
 \caption{Reservation by \winpibt.
 Assume that $a_3$ enter the crossing node in Fig.~\ref{fig:winpibt-reservation:cannotenter} caused by priority inheritance from $a_2$.
 Following the \winpibt algorithm, $a_1$ has already fixed its path.
 Thereby, $a_3$ violates $a_1$'s progression.
 To avoid such situations, agents with lower priorities cannot enter the locations secured by agents with higher priorities until they passes.
 This results in inconvenient path sometimes (\ref{fig:winpibt-reservation:inconvenient}).
 }
 \label{fig:winpibt-reservation}
\end{figure}
\begin{algorithm}[t]
 {\small
 \caption{caller of function $\mathsf{winPIBT}$}
 \label{algo:caller}
 \begin{algorithmic}
  \State $t$: timestep
  \State $p_{i}(t)$ : priority of $a_{i}$ at timestep $t$. $p_{i}(t) \in \mathbb{R}$
  \State $w_{i}(t)$ : window size of $a_{i}$ at timestep $t$. $w_{i}(t) \in \mathbb{N}$
  \State $\kappa$: maximum timestep that agents can secure nodes, initially $0$
  \State $\bm{\Pi}$: provisional paths, initially $\Pi_{i} = ( v_{i}(0) )$
 \end{algorithmic}
 \begin{algorithmic}[1]
  \State {\tiny \textit{(for every timestep $t$)}}
  \State task allocation if required
  \Comment i.e., update goals
  \State update all priorities $p_{i}(t)$, windows $w_{i}(t)$
  \label{algo:caller:update}
  \State Let $\mathcal{U}$ denote a sorted list of $A$ by $p_{i}$
  \For{$ j \in 1, \dots, n$}
  \State $a_{i} \leftarrow \mathcal{U}[j]$
  \If{$\ell_{i} \leq t$}
  \label{algo:caller:skip}
  \If{$j = 1$}
  \Comment the agent with highest priority
  \State $\mathsf{winPIBT}\left(a_{i}, t + w_{i}(t), \bm{\Pi}, \emptyset\right)$
  \label{algo:caller:call1}
  \Else
  \State $\mathsf{winPIBT}\left(a_{i}, \min(t + w_{i}(t), \kappa), \bm{\Pi}, \emptyset\right)$
  \label{algo:caller:call2}
  \EndIf
  \EndIf
  \State $\kappa \leftarrow \min(\kappa, \ell_{i})$
  \IfSingle{$j = 1$}{$\kappa \leftarrow \ell_{i}$}
  \EndFor
 \end{algorithmic}
}
\end{algorithm}

There is a little flexibility on how to call function $\mathsf{winPIBT}$.
Algorithm~\ref{algo:caller} shows one example.
In each timestep~$t$ before the path adjustment phase, the priority $p_{i}(t)$ of an agent $a_{i}$ is updated as mentioned later [Line~\ref{algo:caller:update}].
The window $w_{i}(t)$ is also updated [Line~\ref{algo:caller:update}].
In this paper, we fix $w_{i}(t)$ to be a constant value.
Then, agents elongate their own paths in order of priorities.
Agents that have already determined their path until the current timestep~$t$, i.e., $\ell_{i} > t$, skip making a path [Line~\ref{algo:caller:skip}].
In order not to disturb paths of agents with higher priorities, an upper bound on timesteps $\kappa$ is introduced.
By $\kappa$, agents with lower priorities are prohibited to update the paths beyond lengths of paths of agents with higher priorities.

By the next lemma, \winpibt always gives valid paths.
\begin{lemma}
 \winpibt keeps $\bm{\pi}$ \safe.
 \label{lemma:winpibt-safety}
\end{lemma}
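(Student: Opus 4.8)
The plan is to argue by induction on the execution of Algorithm~\ref{algo:func-winpibt}, maintaining the invariant that $\bm{\pi}$ is \safe after every modification of a determined path. For the base case, initially each $\pi_i = (v_i(0))$ with $\ell_i = 0$; since the input places agents on distinct nodes, the vertex clause of \isolated holds for every pair while the other two clauses are vacuous, so $\bm{\pi}$ is \safe. For the inductive step I would note that a determined path $\pi_i$ is altered in exactly two ways: through $\mathsf{copeStuck}$ [Line~\ref{algo:func-winpibt:copestuck-1}], which appends copies of the last assigned node, and through the secure step $v_i(\ell_i+1) \leftarrow v$ [Line~\ref{algo:func-winpibt:secure-node}], which appends a fresh node. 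The $\mathsf{copeStuck}$ case is immediate from Proposition~\ref{lemma:keep-safety}, since staying at the last node preserves \safe; all the work lies in the secure step.

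For the secure step, let $\tau := \ell_i + 1$ be the timestep at which $a_i$ commits $v$, so the new length becomes $\ell_i = \tau$. Only pairs involving $a_i$ are affected, so I would fix an agent $a_j$ and verify the three clauses of \isolated for $(\pi_i, \pi_j)$, splitting on the relation between $\ell_j$ and $\tau$. If $\ell_j \geq \tau$, then $\pi_i$ is the shorter path: the vertex and swap clauses for $t \leq \tau-1$ hold by the induction hypothesis, those at $t=\tau$ follow from the collision-freeness enforced by $\mathsf{validPath}$/$\mathsf{registerPath}$ combined with the third clause of the hypothesis applied to the old (shorter) $\pi_i$, and the remaining clause $v_i(\tau) \neq v_j(t)$ for $\tau < t \leq \ell_j$ is precisely the non-invasion constraint $\Pi_i(t_i) \neq \Pi_j(t_j)$ imposed at registration. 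If instead $\ell_j < \tau$, then $\pi_i$ is strictly longer and the decisive clause is $v_j(\ell_j) \neq v_i(\tau) = v$; this holds because, before Line~\ref{algo:func-winpibt:secure-node} is reached, the inner loop [Line~\ref{algo:func-winpibt:force-agent}] drives away by priority inheritance every agent with $\ell_j < \ell_i$ sitting on $v$, and the branch [Line~\ref{algo:func-winpibt:require-pibt}] disposes of the remaining agent with $\ell_j = \ell_i$ occupying $v$, so no path of length at most $\ell_i$ still ends at $v$ when it is committed. The first two clauses again follow from the hypothesis since $v_i(t)$ is unchanged for $t \leq \ell_i$.

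The subtle point, and the step I expect to be the main obstacle, is the treatment of rotations together with the recursion across distinct timesteps that distinguishes \winpibt from PIBT. When the requested node is held by an agent already in $R$, a cycle is closing and the commit momentarily breaks the swap clause for the consecutive pairs of that cycle, exactly as noted for PIBT; I would argue that such a rotation is realized atomically and induces no actual vertex or edge collision, so $\bm{\pi}$ is restored to \safe as soon as the cycle is completed, and the invariant therefore holds at every point where the algorithm inspects $\bm{\pi}$. The care required is to show that the recursive calls on lower-priority agents at earlier timesteps [Line~\ref{algo:func-winpibt:let-aj-force}] terminate, relying on the bookkeeping of $R$ to forbid eternal inheritance, and that the revoke-and-recompute branch [Line~\ref{algo:func-winpibt:register2}] never commits a node violating any of the three clauses, which forces the length-dependent case analysis above to be applied uniformly over the entire recursion tree rather than to a single agent in isolation.
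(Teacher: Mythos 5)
Your proposal is correct and follows essentially the same route as the paper's own proof: an invariant-preservation argument over each node-securing (or stay-in-place) step, with a case split on the relation between $\ell_j$ and $\ell_i$, using the non-invasion constraint of $\mathsf{registerPath}$ for the longer-path side, priority inheritance for agents still sitting on the requested node, and the $R$-based rotation exception as a momentary, immediately-repaired violation of the swap clause. The only cosmetic differences are that you merge the paper's subcases $\ell_j < \ell_i$ and $\ell_j = \ell_i$ into one branch and frame the whole thing as an explicit induction, which the paper leaves implicit.
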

\begin{proof}
 Initially, $\bm{\pi}$ is \safe.
 $\bm{\pi}$ is updated via the function $\mathsf{winPIBT}$.
 Before an agent $a_{i}$ calculates a path, $a_{i}$ confirms the existence of a path from timestep $\ell_{i} + 1$ until $\beta$, as defined in Line~\ref{algo:func-winpibt:init-check} while avoiding collisions and using $v$ such that $\ell_i < t^{\prime} \leq \ell_j, v_j(t^{\prime}) = v$, with respect to paths registered in $\bm{\Pi}$.
 We distinguish two cases: 1)~a path exists, or, 2) no path exists.

 \begin{enumerate}[leftmargin=0.5cm]
  \item[1)] a path exists:
 $a_{i}$ now successfully computes a path $\Pi_{i}$ satisfying the condition and starts securing nodes accordingly.
 Assume that $a_{i}$ tries to secure a node $v$ at timestep $\gamma = \ell_{i} + 1$.
 We distinguish three cases regarding other agents $a_{j}$ and their $\ell_{j}$.
 \begin{enumerate}
  \item[a.] $\ell_{j} > \ell_{i}$:
            $\Pi_{i}$ is computed without collisions with paths on $\bm{\Pi}$.
            Moreover, $\Pi_{i}$ avoids $v$ at $t = \gamma$ such that $\forall t^\prime, \gamma < t^\prime \leq \ell_{j}, v_{j}(t^\prime) = v$.
            Thus, $\pi_{i}$ and $\pi_{j}$ are \isolated if $a_{i}$ adds $v$ in its path $\pi_{i}$.
  \item[b.] $\ell_{j} < \ell_{i}$:
            If $v_{j}(\ell_{j}) \neq v$, the operation of adding $v$ to $\pi_{i}$ keeps $\pi_{i}$ and $\pi_{j}$ \isolated, or else $a_{i}$ tries to let $a_{j}$ leave $v$ via priority inheritance [Line~\ref{algo:func-winpibt:let-aj-force}].
            $a_{j}$ now gets the privilege to determine $v_{j}(\ell_{j} + 1)$.
            This action of $a_{j}$ remains $\pi_{i}$ and $\pi_{j}$ \isolated following two reasons.
            First, $a_{i}$ never secures $v$ until $a_{j}$ goes away.
            Second, if $a_{j}$ successfully computes a path $\Pi_{j}$, the previous part applies.
            If failed, $v_{j}(\ell_{j} + 1)$ is set to $v_{j}(\ell)$, i.e., $v$.
            This action also keeps $\pi_{i}$ and $\pi_{j}$ \isolated.
            If some agent $a_{j}$ stays on $v$ until timestep $\ell_{i}$, the following part applies.
  \item[c.] $\ell_{j} = \ell_{i}$:
            This case is equivalent to the PIBT algorithm.
            If $v \neq v_{j}(\ell_{j})$, the operation of adding $v$ to $\pi_{i}$ keeps $\pi_{i}$ and $\pi_{j}$ \isolated, or else there are two possibilities: either $a_{j} \not\in R$ or $a_{j} \in R$.
            If $a_{j} \not\in R$, $a_{j}$ inherits the priority of $a_{i}$.
            When the outcome of backtracking is valid, this means that, at timestep $\gamma$, $a_{j}$ secures a node other than $v$ and $v_{i}(\ell_{i})$ (to avoid swap conflict), since both have already registered in $\Pi_{i}$.
            Thus, $a_{i}$ successfully secures $v$ while keeping $\pi_{i}$ and $\pi_{j}$ \isolated.
            When the outcome is invalid, $a_{j}$ stays at its current node, i.e., $v_{j}(\gamma) = v_{j}(\gamma - 1)$, and $a_{i}$ recomputes $\Pi_{i}$.
            Still, $\pi_{i}$ and $\pi_{j}$ are \isolated since $a_{i}$ has not secured a node at timestep $\gamma$.
            Next, consider the case where $a_{j} \in R$.
            This happens when $a_{j}$ is currently requesting another node.
            Thus, after $a_{i}$ secures $v$ at timestep $\gamma$ and backtracking returns as valid, $a_{j}$ successfully secures the node.
            $\pi_{i}$ and $\pi_{j}$ is temporally not \safe, but $\pi_{i}$ recovers the $\safe$ condition immediately.
            Intuitively, this case corresponds to rotations.
 \end{enumerate}
 As a result, $\bm{\pi}$ is kept \safe through the action of $a_{i}$ to secure a node.

 \item[2)] no path exists: In this case, $a_{i}$ chooses to stay at its current node.
 Obviously, this action keeps $\bm{\pi}$ \safe.
\end{enumerate}

Therefore, regardless of whether a path exists or not, $\bm{\pi}$ is \safe.
\end{proof}

The following lemma shows that the agent with highest priority can move arbitrarily, akin to Lemma~\ref{lemma:pibt-local-movement} for PIBT.
\begin{lemma}
 $\mathsf{winPIBT}(a_{i}, \alpha, \bm{\Pi}, \emptyset)$ gives $a_{i}$ an arbitrary path from $t=\ell_i+1$ until $t = \alpha$ if $G$ is \graphcond, and $\forall j \neq i, \ell_{i} \geq \ell_{j}$ and $\bm{\Pi} = \bm{\pi}$.
 \label{lemma:winpibt-highest}
\end{lemma}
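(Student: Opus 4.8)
The claim is that when $a_i$ has the highest priority (equivalently, the longest committed path among all agents, since $\forall j \neq i, \ell_i \geq \ell_j$) and starts from the fully-committed state $\bm{\Pi} = \bm{\pi}$, then $\mathsf{winPIBT}(a_i, \alpha, \bm{\Pi}, \emptyset)$ realizes \emph{any} chosen path for $a_i$ from $t = \ell_i + 1$ to $t = \alpha$. The plan is to argue by induction on the number of nodes $a_i$ must still secure, i.e. on $\alpha - \ell_i$, showing that each single-step extension succeeds and returns valid. The base case $\ell_i \geq \alpha$ is immediate from Line~\ref{algo:func-winpibt:init-check}. So the heart of the matter is the inductive step: given that $a_i$ wants to move from $v_i(\ell_i)$ to an arbitrary neighbor $v^\ast = \Pi_i(\ell_i + 1)$, I must show this single securing step never fails.

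\textbf{The key step: reducing to Lemma~\ref{lemma:pibt-local-movement}.} First I would observe that, because $a_i$ has the highest priority, the constraint in $\mathsf{validPath}$/$\mathsf{registerPath}$ that forbids a shorter path from invading a longer one never blocks $a_i$: since $\ell_i \geq \ell_j$ for all $j$, $a_i$ is never the ``shorter'' path, so the ideal path over $v^\ast$ is always registerable. Then I would trace what happens when $a_i$ requests $v^\ast$. If no agent currently occupies $v^\ast$ as its last node (no $a_j$ with $\ell_j \leq \ell_i$ and $v_j(\ell_j) = v^\ast$), the node is secured directly. Otherwise, any such $a_j$ has $\ell_j \leq \ell_i$; the inner while-loop [Line~\ref{algo:func-winpibt:force-agent}] drives every $a_j$ with $\ell_j < \ell_i$ forward by priority inheritance until all blocking agents sit at timestep exactly $\ell_i$. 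At that point the situation is \emph{precisely} a single PIBT step at timestep $\ell_i$: $a_i$ is the highest-priority requester and invokes $\mathsf{winPIBT}(a_j, \ell_j + 1, \cdot)$ on the occupant [Line~\ref{algo:func-winpibt:pibt}], which is exactly PIBT's priority inheritance. Here I invoke Lemma~\ref{lemma:pibt-local-movement}: since $G$ is \graphcond, there is a simple cycle $\mathbf{C} = (v_i(\ell_i), v^\ast, \dots)$ with $|\mathbf{C}| \geq 3$, so PIBT guarantees $a_i$ successfully moves onto $v^\ast$. Lemma~\ref{lemma:winpibt-safety} guarantees the resulting $\bm{\pi}$ stays \safe throughout.

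\textbf{Anticipated obstacle and how I would handle it.} The main subtlety is verifying that the recursive priority-inheritance machinery in \winpibt genuinely collapses to the plain PIBT step of Lemma~\ref{lemma:pibt-local-movement}, rather than diverging or getting stuck on the backtracking-invalid branch. Concretely, I need to rule out that the inner loop [Lines~\ref{algo:func-winpibt:force-agent}--\ref{algo:func-winpibt:end-force-agent}] runs forever, and that the invalid outcome on Line~\ref{algo:func-winpibt:pibt} forces $a_i$ into $\mathsf{copeStuck}$. For termination of the inner loop, I would argue that each call $\mathsf{winPIBT}(a_j, \ell_j + 1, \cdot)$ strictly increases $\ell_j$, so the set of agents with $\ell_j < \ell_i$ occupying $v^\ast$ shrinks monotonically; combined with the role of $R$ preventing eternal inheritance, the loop halts. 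For the validity of the final PIBT step, the crux is that Lemma~\ref{lemma:pibt-local-movement}'s cycle argument exactly shows the displaced agent $a_j$ always finds an escape node along $\mathbf{C}$, so backtracking returns valid and the revoke/recompute branch is never taken for the highest-priority agent. I expect the bulk of the careful writing to be in aligning \winpibt's state (the committed paths, the value of $\beta$, and the contents of $R$) at the moment of the critical request with the exact hypotheses of Lemma~\ref{lemma:pibt-local-movement}, since the windowed algorithm carries extra bookkeeping that PIBT does not. Once that alignment is established, the induction closes by applying the step $\alpha - \ell_i$ times.
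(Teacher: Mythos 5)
Your proposal matches the paper's argument: both rest on the observation that, since $\forall j\neq i,\ \ell_j\le\ell_i$ and $\bm{\Pi}=\bm{\pi}$, the ``shorter path cannot invade longer path'' constraint never restricts $a_i$, so an arbitrary path is registerable, and that each individual securing step either succeeds trivially (no occupant) or reduces, after the inner loop pushes lagging agents up to timestep $\ell_i$, to the single-step mechanism of Lemma~\ref{lemma:pibt-local-movement}, which succeeds because $G$ is \graphcond. The paper's write-up is terser (it does not spell out the induction, the inner-loop termination, or the role of $R$), but the decomposition and the key invocation of Lemma~\ref{lemma:pibt-local-movement} are the same.
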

\begin{proof}
 $\forall j \neq i, |\Pi_{j}| \leq |\Pi_{i}|$ holds since $\ell_{j} \leq \ell_{i}$ and $\bm{\Pi} = \bm{\pi}$.
 Thus, $a_{i}$ can compute an arbitrary path $\Pi_{i}$ from timestep $\ell_{i} + 1$ until $\alpha$.
 We now show that $a_{i}$ never receives invalid as outcome of backtracking.
 According to $\Pi_{i}$, $a_{i}$ tries to secure a node sequentially.
 Let $v$ be this node at timestep $\gamma$.
 If $\not\exists a_{j}$ s.t. $v_{j}(\ell_{j}) = v$, $a_{i}$ obviously secures $v$ at timestep $\gamma$.
 The issue is only when $\exists a_{j}$ s.t. $v_{j}(\ell_{j})$, $\ell_{j} = \gamma - 1$, however, the equivalent mechanism of Lemma~\ref{lemma:pibt-local-movement} works and $a_{i}$ successfully moves to $v$ due to the assumption that $G$ is \graphcond.
 Thus, $a_{i}$ never receives an invalid outcome and moves on an arbitrary path until $t=\alpha$.
\end{proof}

\subsubsection{Prioritization}
The prioritization scheme of \winpibt is exactly the same as in the PIBT algorithm.
Let $\eta_{i}(t) \in \mathbb{N}$ be the timesteps elapsed since $a_{i}$ last updated the destination $g_{i}$ prior to timestep $t$.
Note that $\eta_{i}(0) = 0$.
Let $\epsilon_{i} \in [0,1)$ be a unique value to each agent $a_{i}$.
At every timestep, $p_{i}(t)$ is computed as the sum of $\eta_{i}(t)$ and $\epsilon_{i}$.
Thus, $p_{i}(t)$ is unique between agents in any timestep.

By this prioritization, we derive the following theorem.
\begin{theorem}
 By \winpibt, all agents reach their own destinations in finite timesteps after the destinations are given if $G$ is \graphcond and $\forall i, w_{i}(t)$ is kept finite in any timestep.
 \label{theorem:global-movement}
\end{theorem}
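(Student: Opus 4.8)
The plan is to mirror the reachability argument for PIBT (Theorem~\ref{theorem:pibt}) but to extract only finiteness rather than the explicit bound $\mathrm{diam}(G)|A|$, since the look-ahead windows make the precise step count hard to control. First, Lemma~\ref{lemma:winpibt-safety} already guarantees that $\bm{\pi}$ stays \safe at all times, so no collision ever occurs and every call returns a valid path; it therefore suffices to prove reachability, i.e., that an agent assigned a destination at some time $t_0$ occupies it at some finite time. I would argue by contradiction, assuming that some agent is \emph{permanently stuck} (never reaches the destination fixed at $t_0$), and then show that such an agent is forced to become the globally highest-priority agent and, in that role, to reach its goal.

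The first building block is a monotonicity property of the scheme $p_i(t)=\eta_i(t)+\epsilon_i$. While an agent fails to reach its goal, $\eta_i$ increases by exactly one each step, whereas an agent that reaches its goal resets $\eta$ to $0$, dropping its priority below $1$. Hence, once a permanently-stuck agent has accumulated $\eta\ge 1$, any agent that resets afterwards lands strictly below it and, as both priorities then grow at unit rate, can never overtake it again (until it resets once more, even lower). I would use this to show that the relative order among permanently-stuck agents stabilizes, that every non-stuck agent is eventually pushed below all of them for good, and therefore that after a finite time the globally highest-priority agent is \emph{some} permanently-stuck agent; call the eventual top one $a^{\ast}$, which then stays top until it reaches its goal.

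The second, and main, building block is that the globally top agent reaches its goal in finite time; this is where the windows intrude and where Lemma~\ref{lemma:winpibt-highest} must be invoked, so its hypotheses ($\bm{\Pi}=\bm{\pi}$ and $\ell_{a^\ast}\ge\ell_j$ for all $j$) have to be re-established. The key observation is the $\kappa$ bookkeeping in Algorithm~\ref{algo:caller}: once $a^\ast$ is processed first it sets $\kappa\leftarrow\ell_{a^\ast}$, and every lower-priority agent is thereafter capped by $\min(t+w_i(t),\kappa)$, so no agent can reserve beyond $a^\ast$'s horizon. Consequently, at the start of any timestep at which $a^\ast$ is top, $\ell_{a^\ast}\ge\ell_j$ for all $j$ and $\bm{\Pi}=\bm{\pi}$. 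The remaining nuisance is that $a^\ast$ may carry a block of already-secured future moves committed before it became top (possibly pointing away from the goal); but since $w_i(t)$ is finite at every step, this block has finite length, so within finitely many steps $a^\ast$ catches up to $\ell_{a^\ast}=t$ and then, with the hypotheses of Lemma~\ref{lemma:winpibt-highest} met, may append an arbitrary path — in particular a shortest path toward its goal — without ever receiving an invalid backtrack (the mechanism of Lemma~\ref{lemma:pibt-local-movement} applies). Each such planning phase strictly reduces the distance to the goal, so $a^\ast$ reaches it after at most $\mathrm{diam}(G)$ advancing steps, hence in finite time.

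Combining the two blocks yields the contradiction: the assumed permanently-stuck agent would eventually become the top agent $a^\ast$ and then reach its destination, contradicting permanence. Therefore no agent is permanently stuck, and all agents reach their destinations in finite time whenever $G$ is \graphcond and every $w_i(t)$ is finite. I expect the delicate part to be the middle of the third paragraph: rigorously showing that the committed pre-top reservations together with the $\kappa$ cap cause only a \emph{bounded} delay, so that the hypotheses of Lemma~\ref{lemma:winpibt-highest} are genuinely met at a recurring and eventually every timestep while $a^\ast$ is top. Finiteness of the windows is precisely what prevents this delay from blowing up, which is why it is a hypothesis of the theorem.
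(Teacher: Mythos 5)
Your proposal is correct and follows essentially the same route as the paper's proof: the agent with the (eventually) highest priority is shown to satisfy the hypotheses of Lemma~\ref{lemma:winpibt-highest} after a finite delay --- because the $\kappa$ cap in Algorithm~\ref{algo:caller} stops others from reserving beyond its horizon and the finite windows bound any pre-committed reservations --- after which it follows a shortest path to its goal, and the dynamic prioritization then rotates the top position through all remaining agents. Your contradiction framing and the more explicit treatment of the priority dynamics and of re-establishing $\bm{\Pi}=\bm{\pi}$ and $\ell_{i}\geq\ell_{j}$ are just more detailed renderings of the steps the paper states tersely.
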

\begin{proof}
 Once $a_{i}$ gets the highest priority, the condition satisfying Lemma~\ref{lemma:winpibt-highest} comes true in finite timestep since no agents can newly reserve the path over the timestep limit set by the previous highest agent.
 Once such condition is realized, $a_{i}$ can move along the shortest path thanks to Lemma~\ref{lemma:winpibt-highest}.
 Until $a_{i}$ reaches its destination, this situation continues since
 Algorithm~\ref{algo:caller} ensures that function $\mathsf{winPIBT}$ is always called by other agents such that the second argument is never over $\ell_{i}$.
 Thus, $a_{i}$ reaches its destination in finite steps, and then drops its priority.
 During this, other agents increase their priority based on the definition of $\eta_{j}(t)$ and one of them obtains the highest priority after $a_{i}$ drops its priority.
 As long as such agents remain, the above-mentioned process is repeated.
 Therefore, all agents must reach their own destination in finite timestep after the destinations are given.
\end{proof}

\begin{figure*}[t]
 \centering
 \begin{tabular}{l}
  \begin{minipage}{0.24\hsize}
   \centering
   \begin{tikzpicture}
    \node[anchor=south west,inner sep=0] at (0,0)
    {\includegraphics[width=1\hsize]{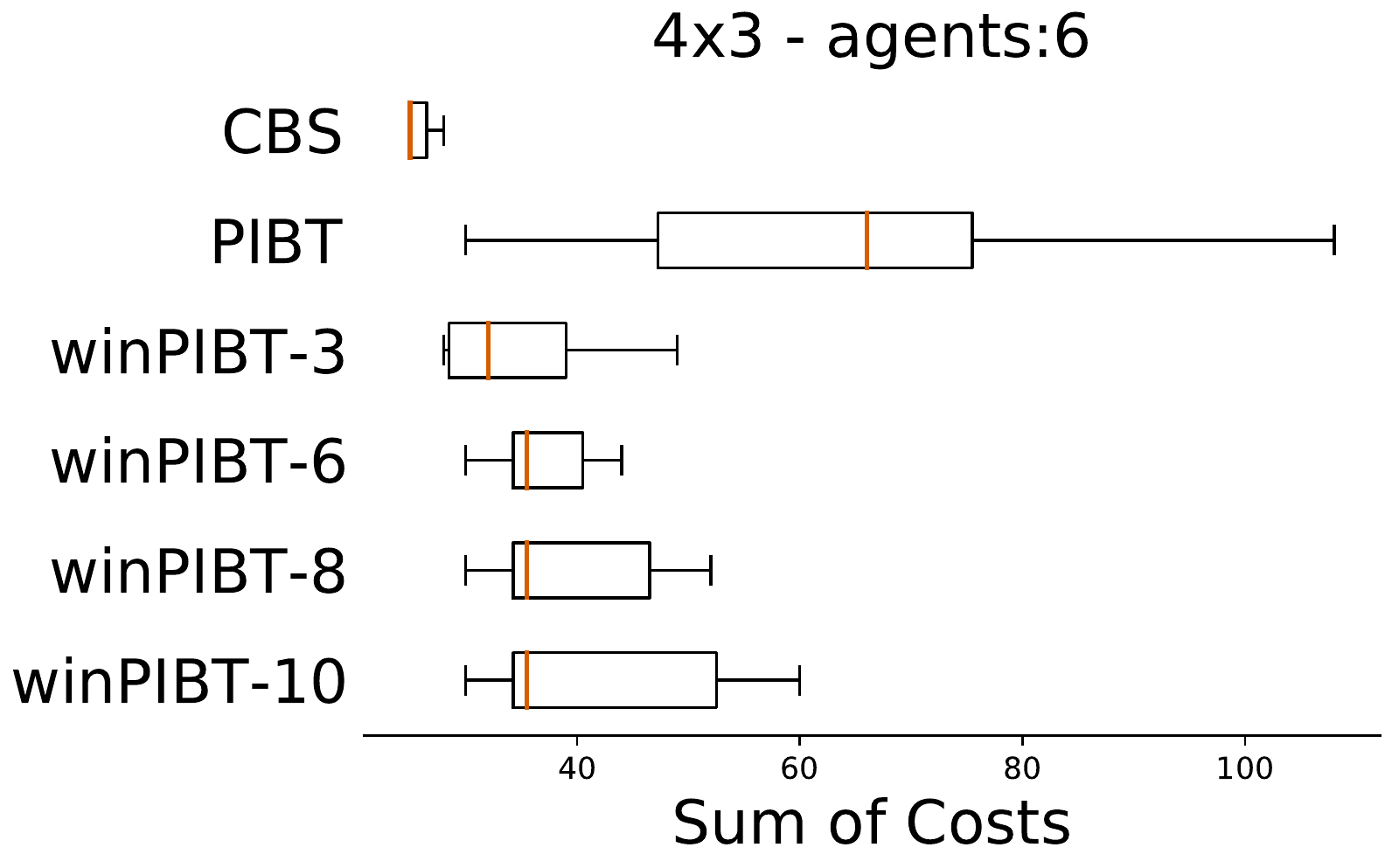}};
    \node[anchor=south west,inner sep=0] at (3.3, 0.5)
    {\includegraphics[width=0.18\hsize]{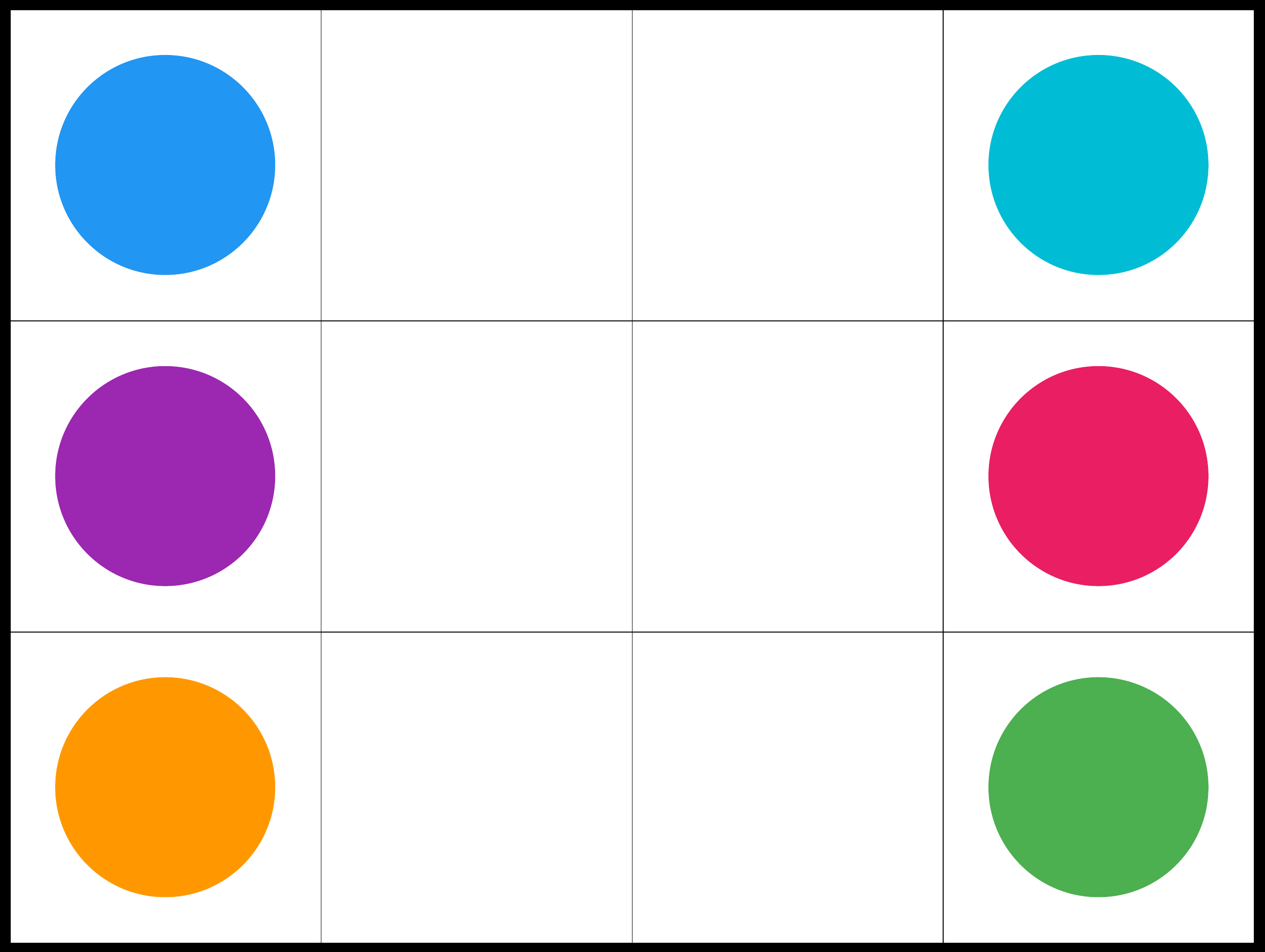}};
   \end{tikzpicture}
  \end{minipage}
  \begin{minipage}{0.24\hsize}
   \centering
   \begin{tikzpicture}
    \node[anchor=south west,inner sep=0] at (0,0)
    {\includegraphics[width=1\hsize]{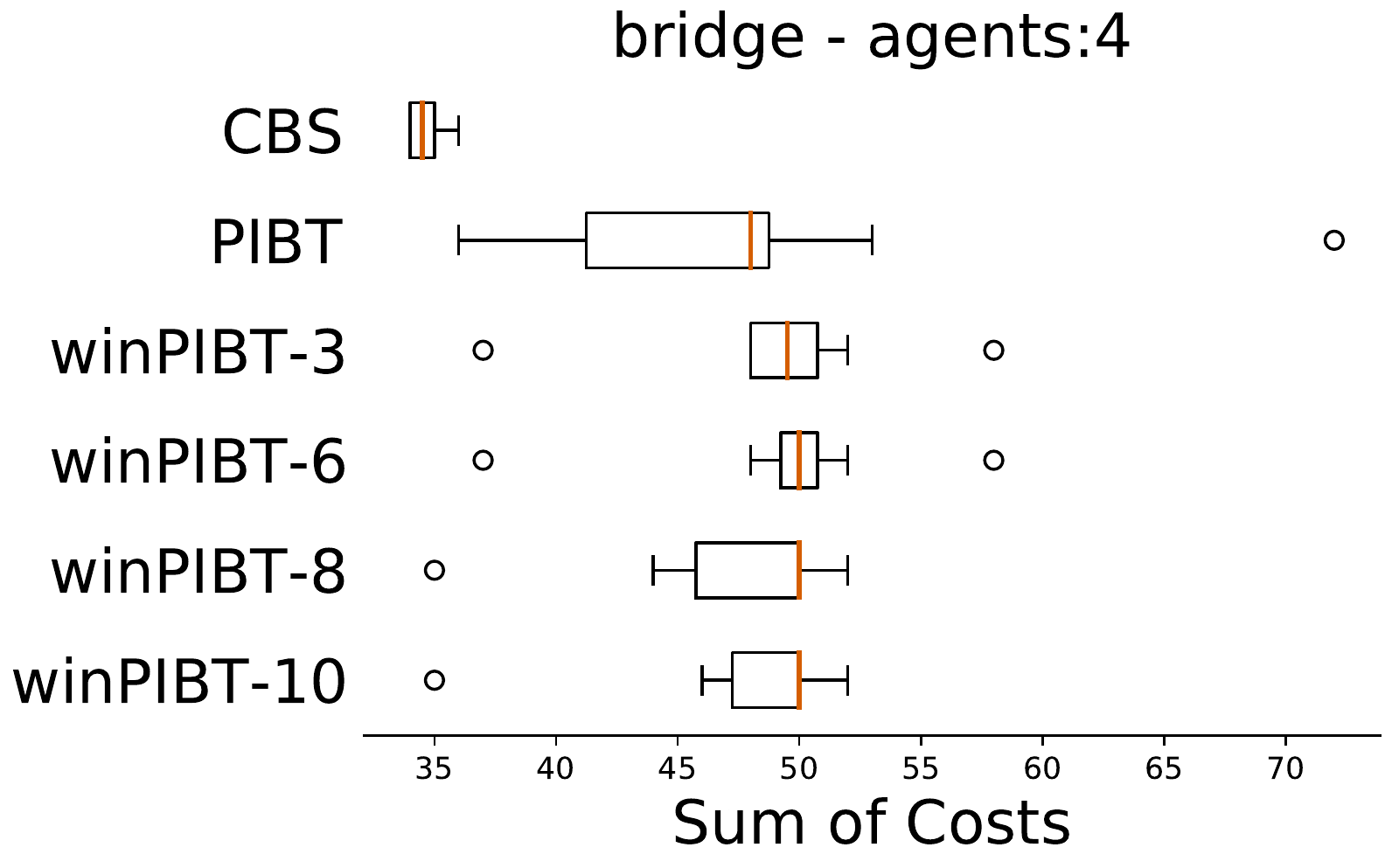}};
    \node[anchor=south west,inner sep=0] at (3.1, 0.5)
    {\includegraphics[width=0.25\hsize]{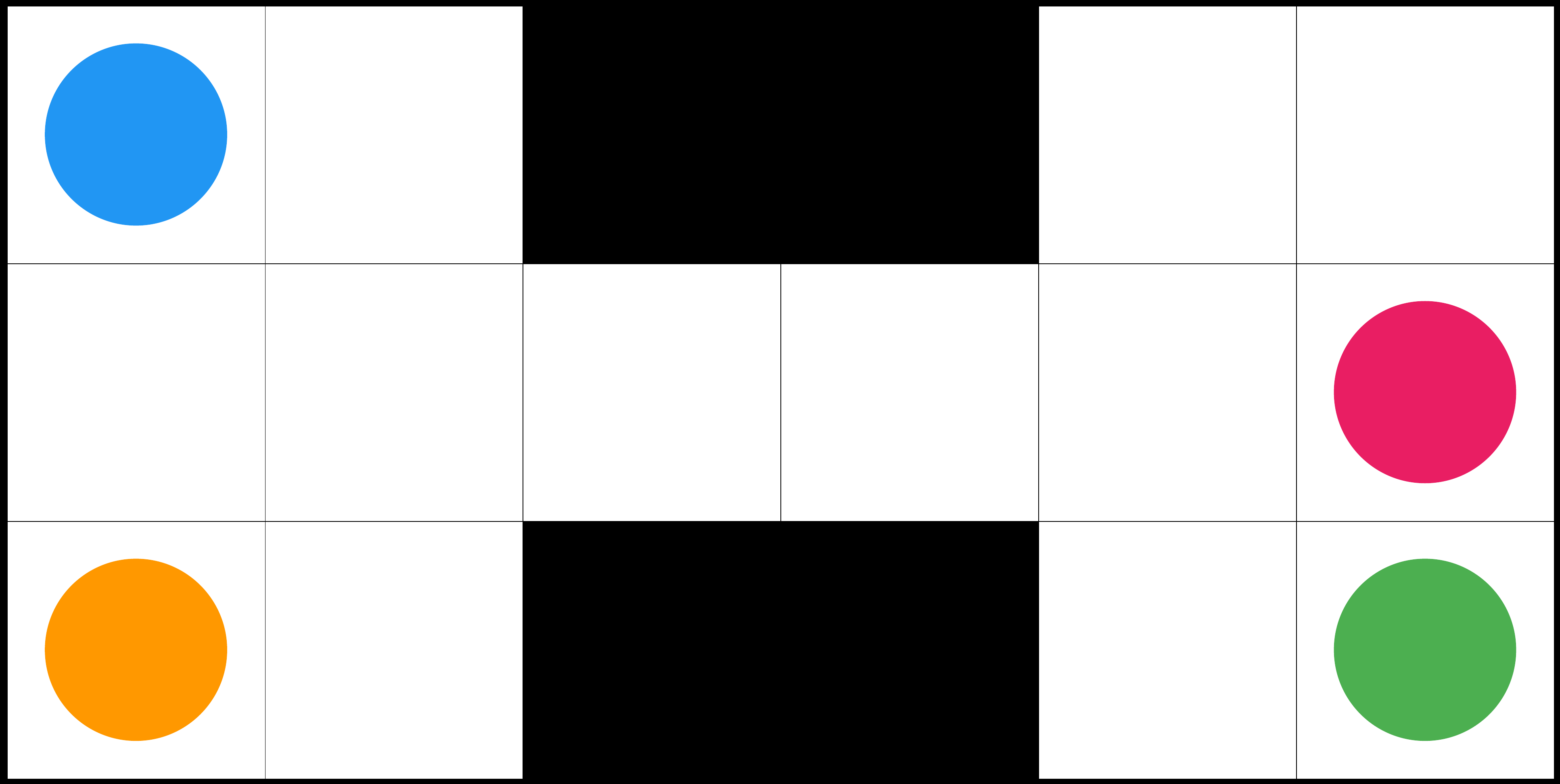}};
   \end{tikzpicture}
  \end{minipage}
  \begin{minipage}{0.24\hsize}
   \centering
   \begin{tikzpicture}
    \node[anchor=south west,inner sep=0] at (0,0)
    {\includegraphics[width=1\hsize]{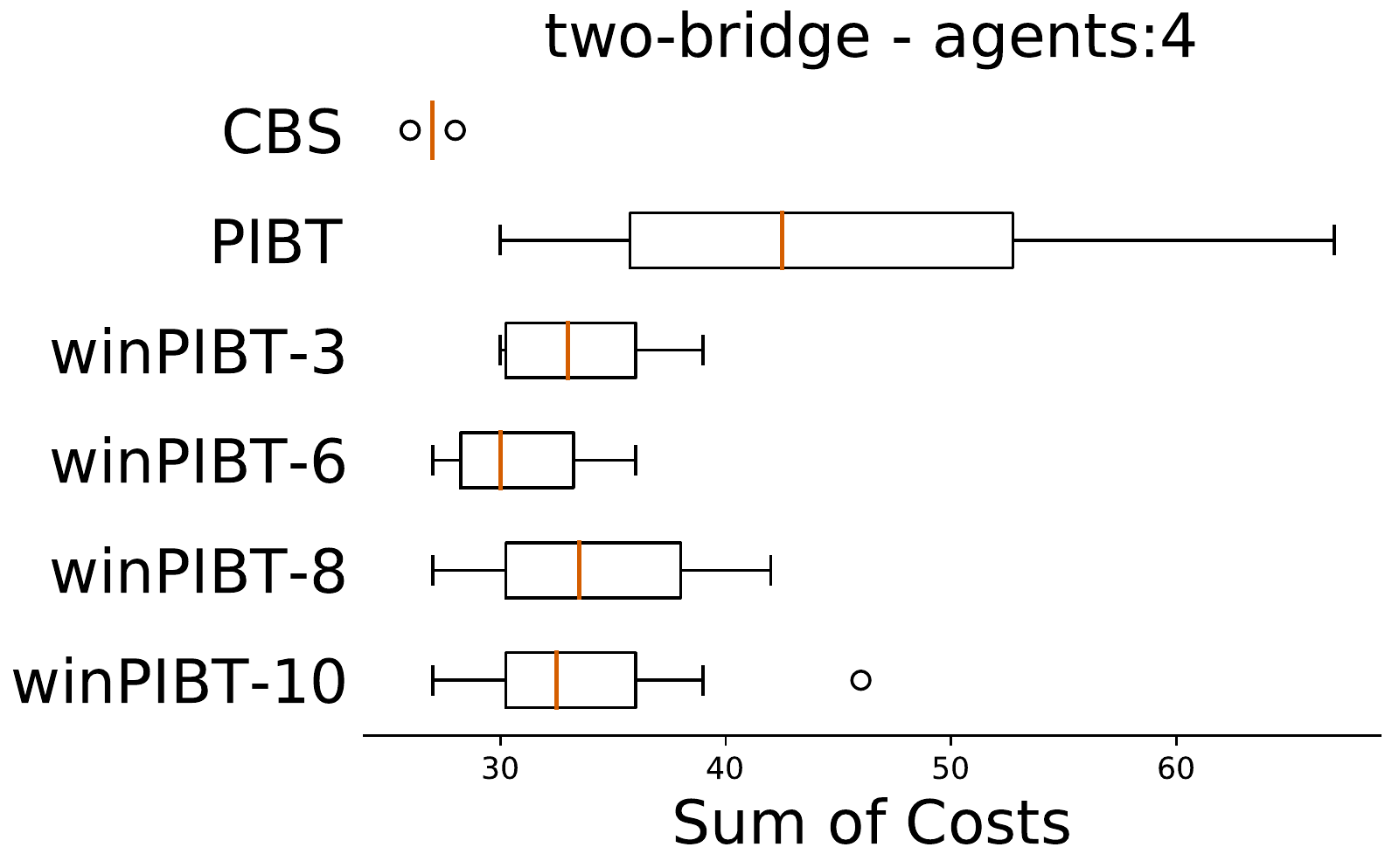}};
    \node[anchor=south west,inner sep=0] at (3.1, 0.5)
    {\includegraphics[width=0.25\hsize]{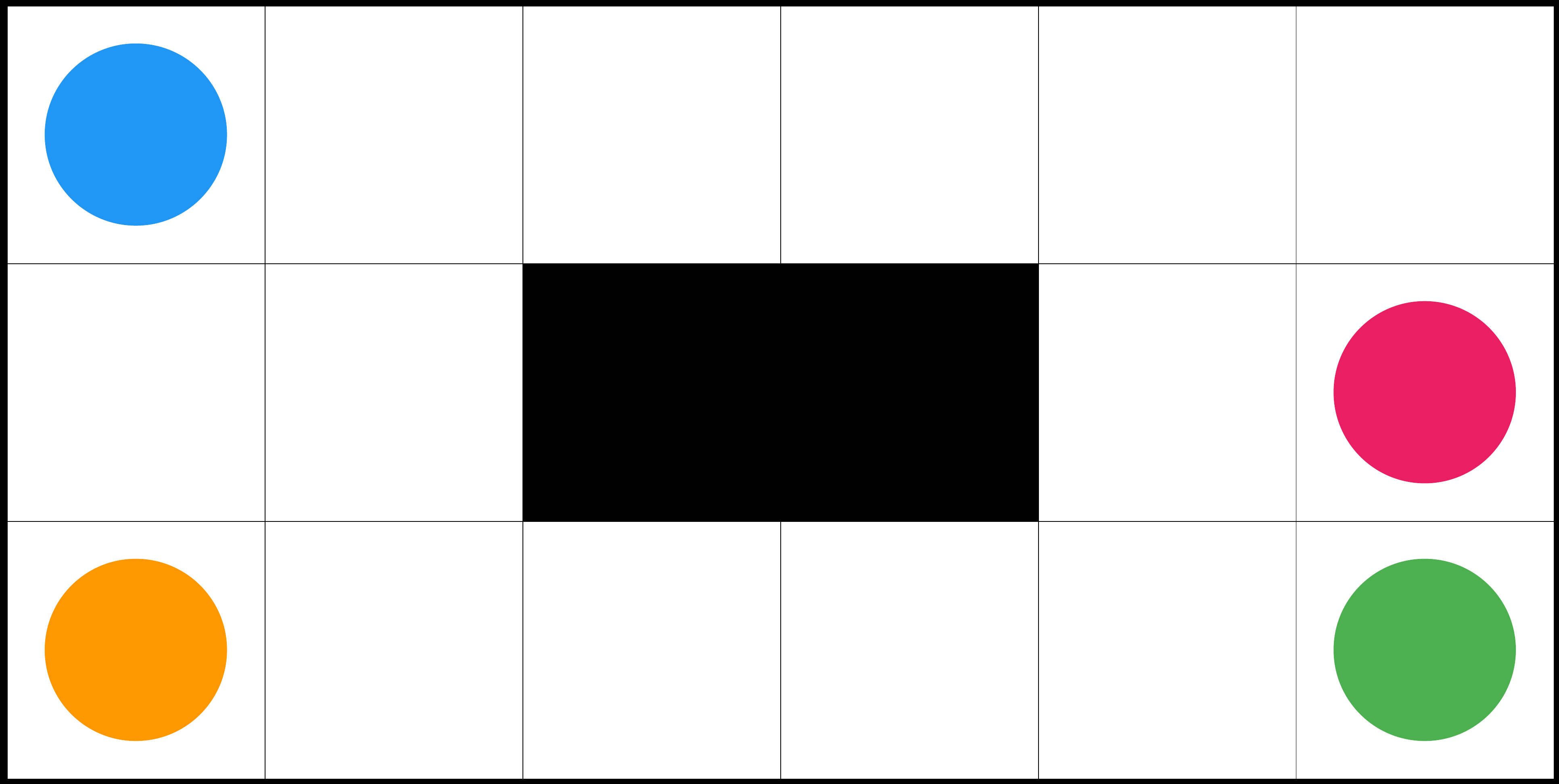}};
   \end{tikzpicture}
  \end{minipage}
  \begin{minipage}[]{0.24\hsize}
   \centering
   \begin{tikzpicture}
    \node[anchor=south west,inner sep=0] at (0,0)
    {\includegraphics[width=1.05\hsize]{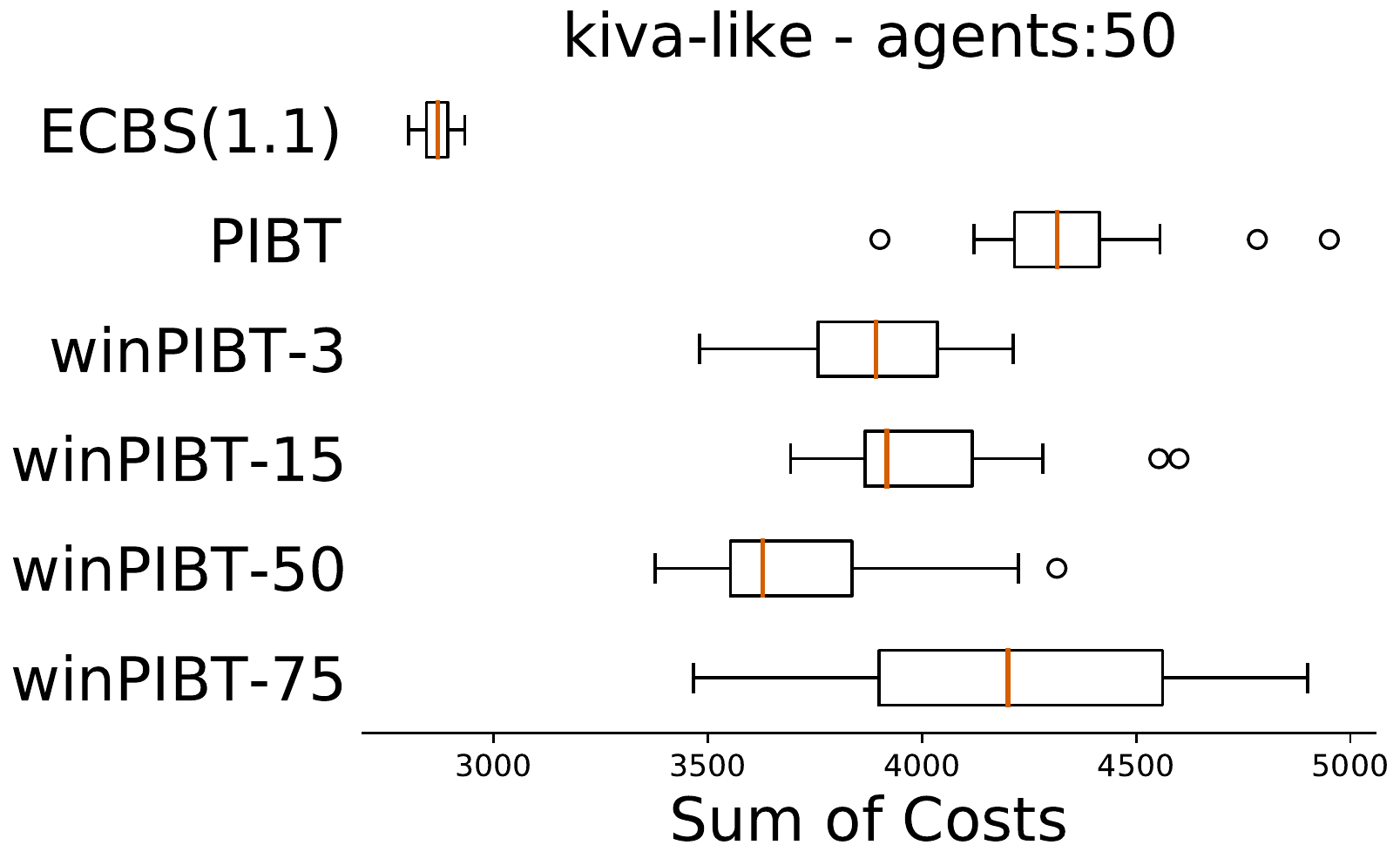}};
   \end{tikzpicture}
  \end{minipage}
 \end{tabular}
 \caption{Results of classical MAPF with basic benchmark.
   Benchmark are shown in each figure.
   Note that, the horizontal axes do not start from 0 to highlight the difference between solvers.}
 \label{fig:mapf-result-1}
\end{figure*}
\begin{figure*}[t]
 \centering
 \begin{tabular}{rllll}
  \begin{minipage}{0.08\hsize}
   \centering
   {\scriptsize \emptymid}
   \vspace{0.1cm}
   \includegraphics[width=1\hsize]{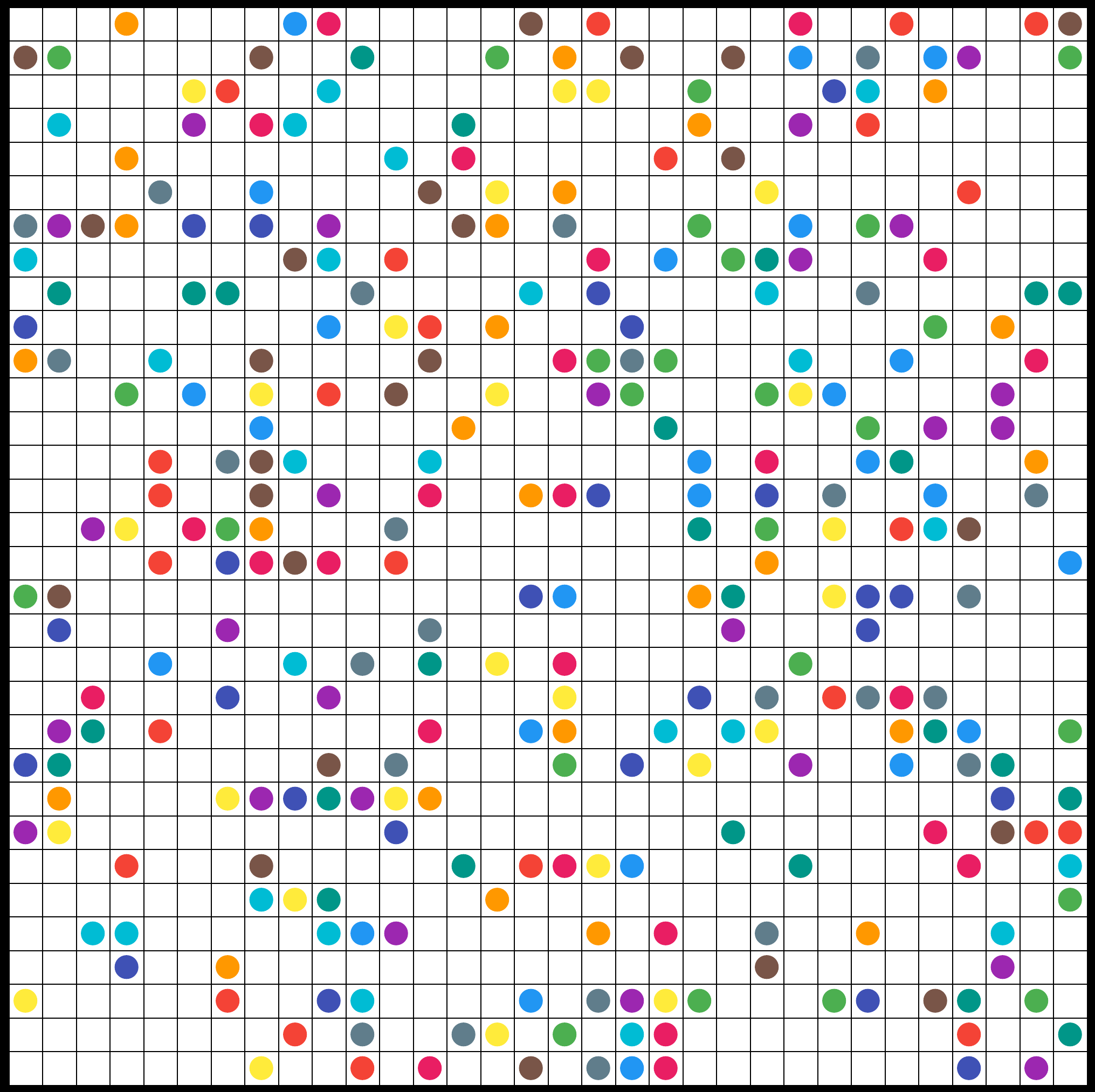}
  \end{minipage}
  \begin{minipage}{0.22\hsize}
   \centering
   \includegraphics[width=1\hsize]{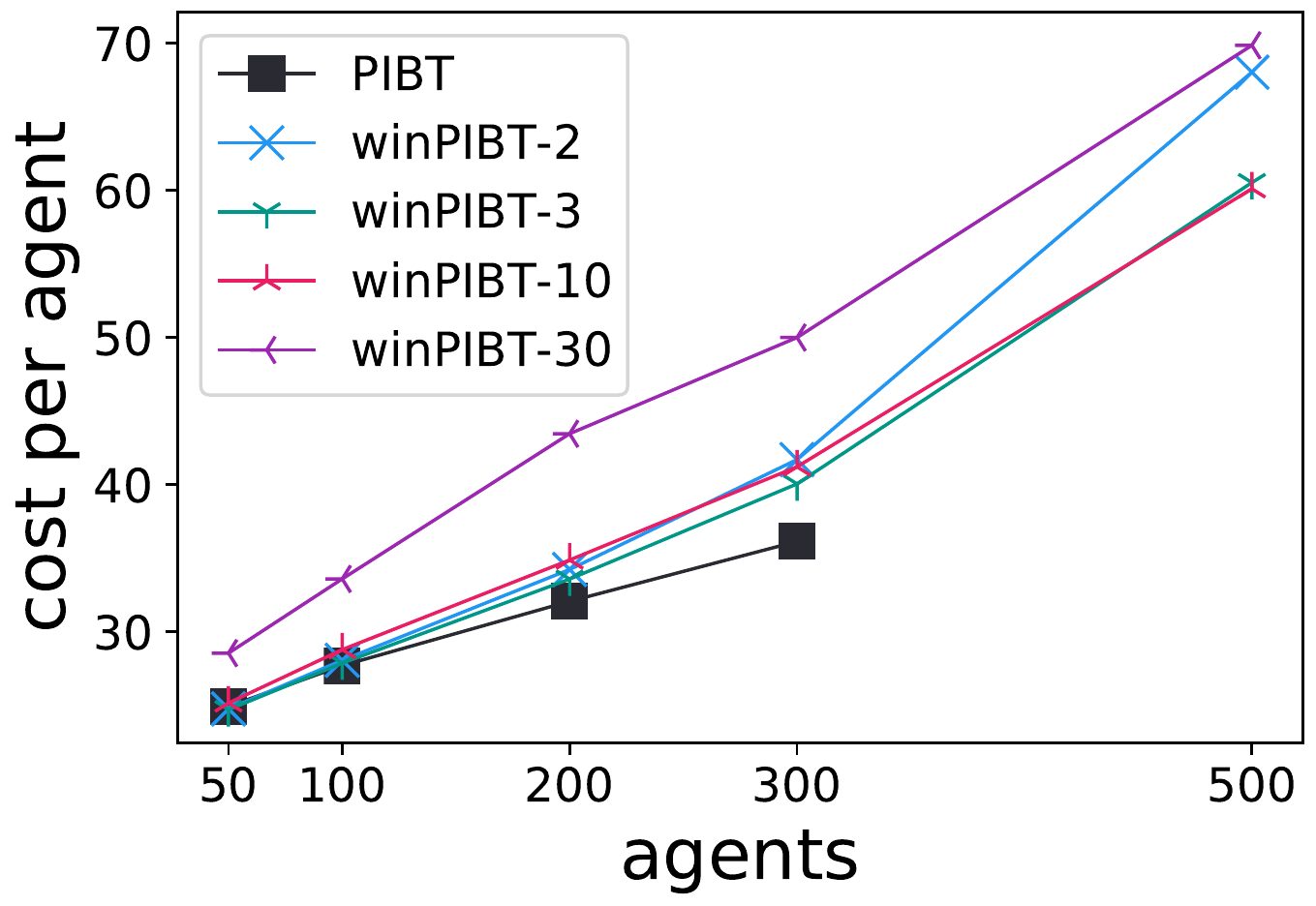}
  \end{minipage}
  \begin{minipage}{0.22\hsize}
   \centering
   \includegraphics[width=1\hsize]{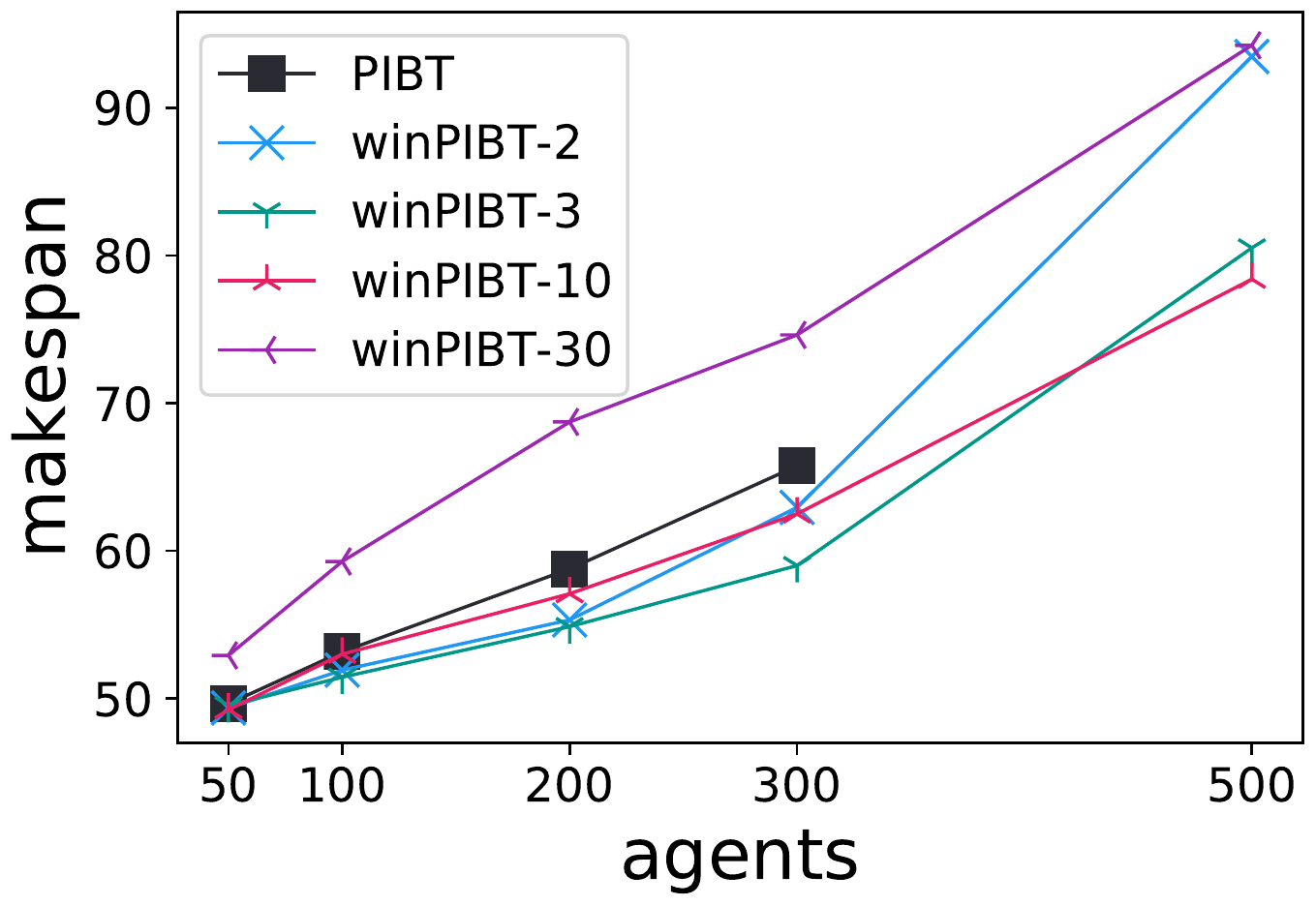}
  \end{minipage}
  \begin{minipage}{0.22\hsize}
   \centering
   \includegraphics[width=1\hsize]{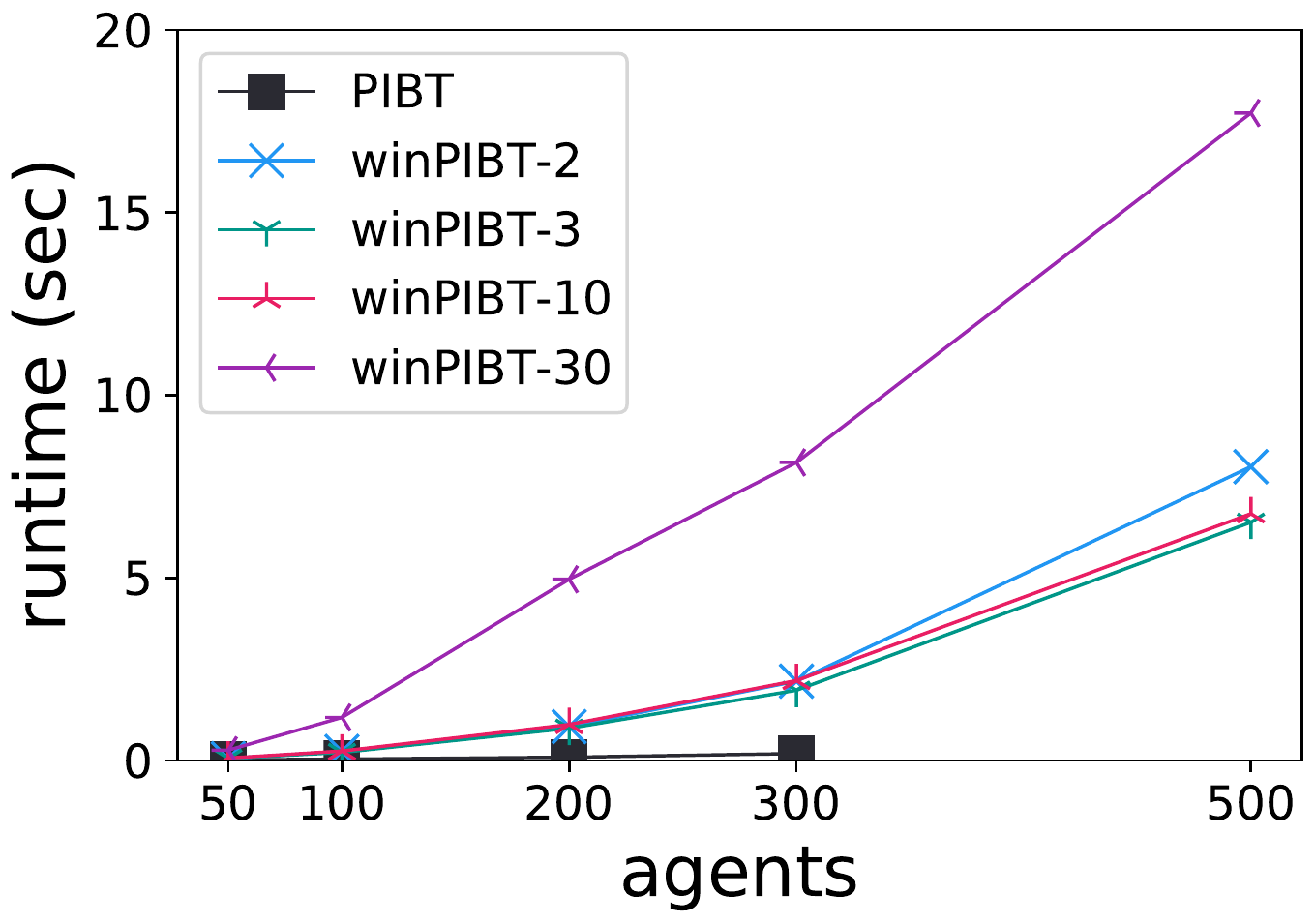}
  \end{minipage}
  \begin{minipage}{0.22\hsize}
   \centering
   \includegraphics[width=1\hsize]{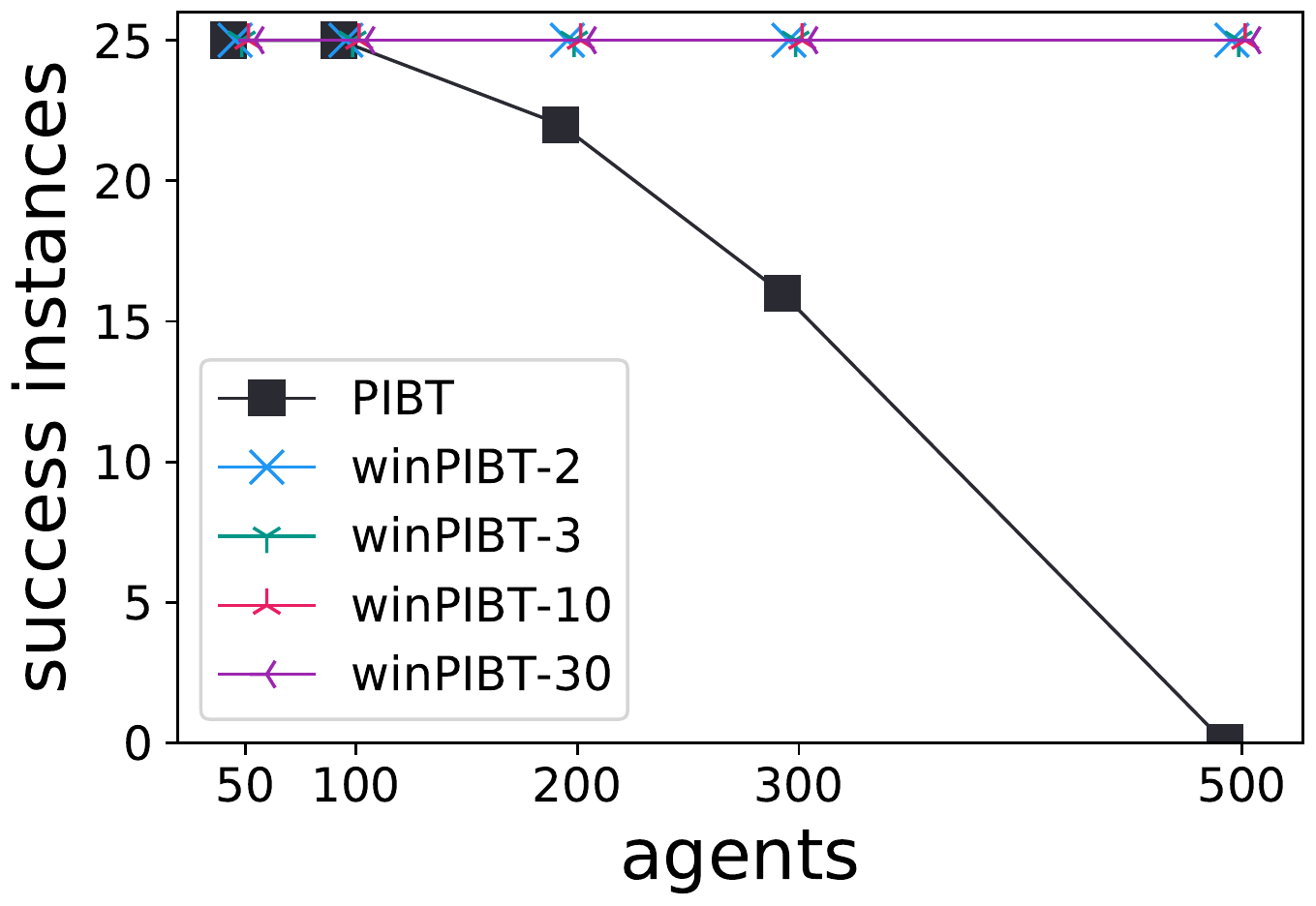}
  \end{minipage}
  \\
  \begin{minipage}{0.08\hsize}
   \centering
   {\scriptsize \ost}
   \vspace{0.1cm}
   \includegraphics[width=1\hsize]{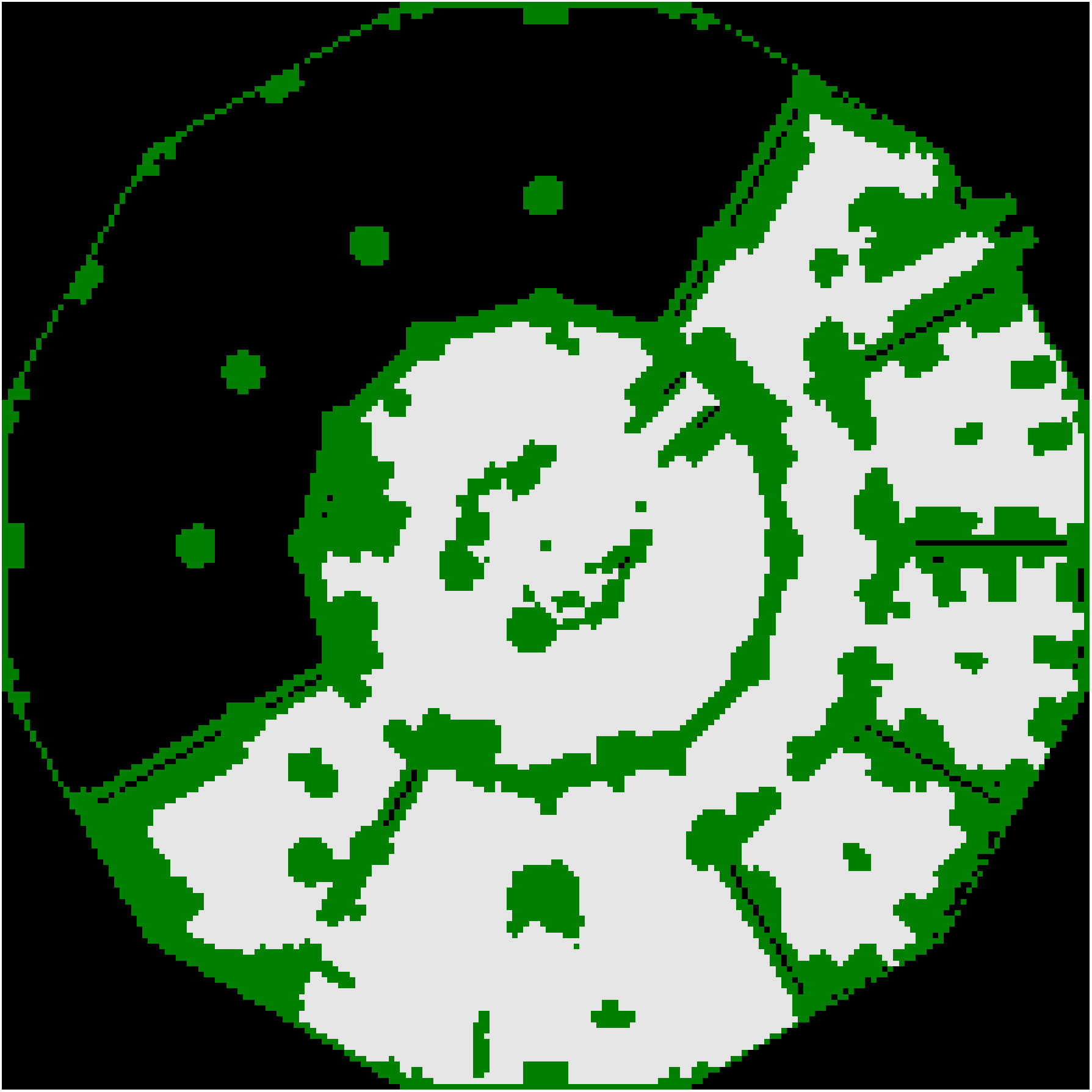}
  \end{minipage}
  \begin{minipage}{0.22\hsize}
   \centering
   \includegraphics[width=1\hsize]{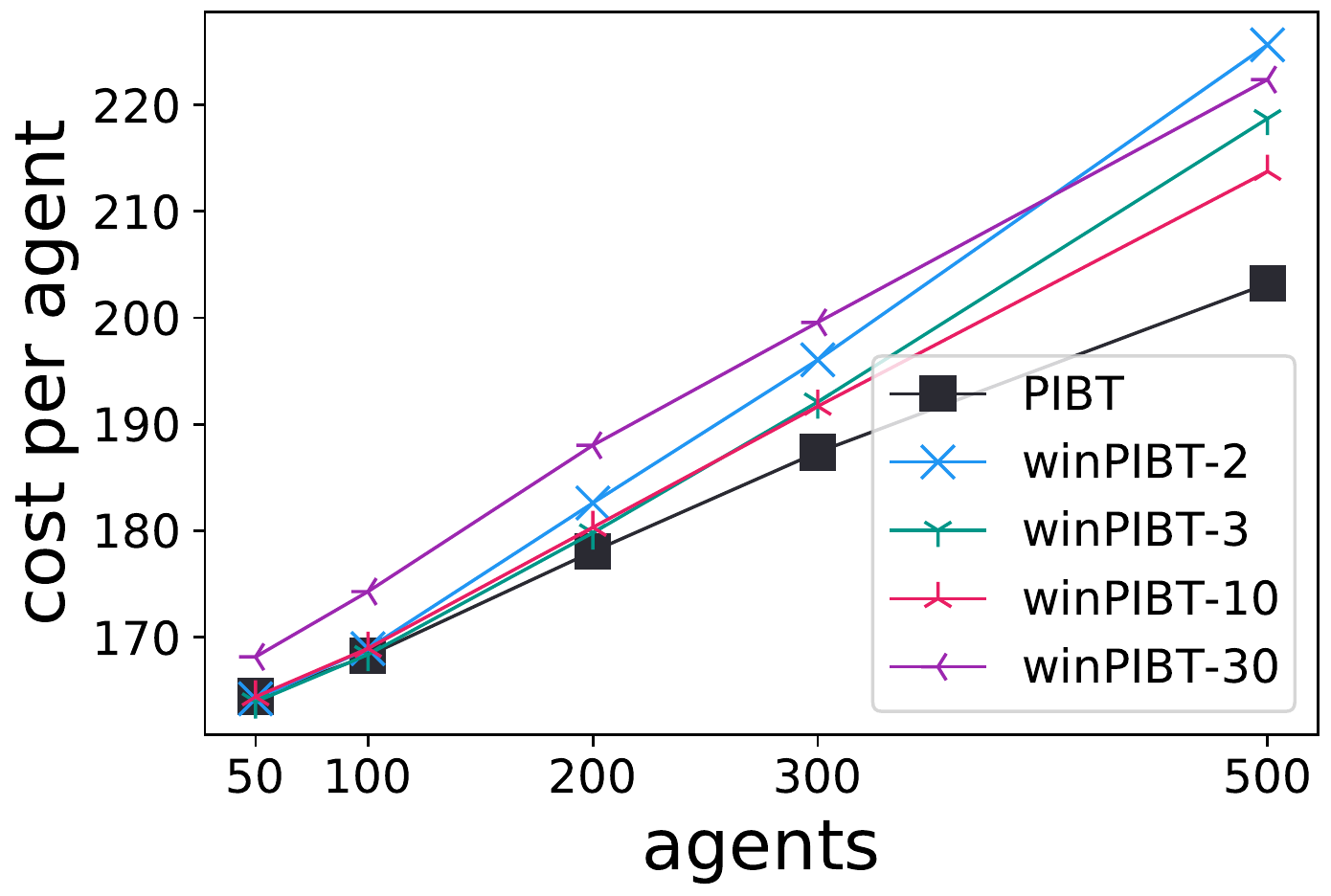}
  \end{minipage}
  \begin{minipage}{0.22\hsize}
   \centering
   \includegraphics[width=1\hsize]{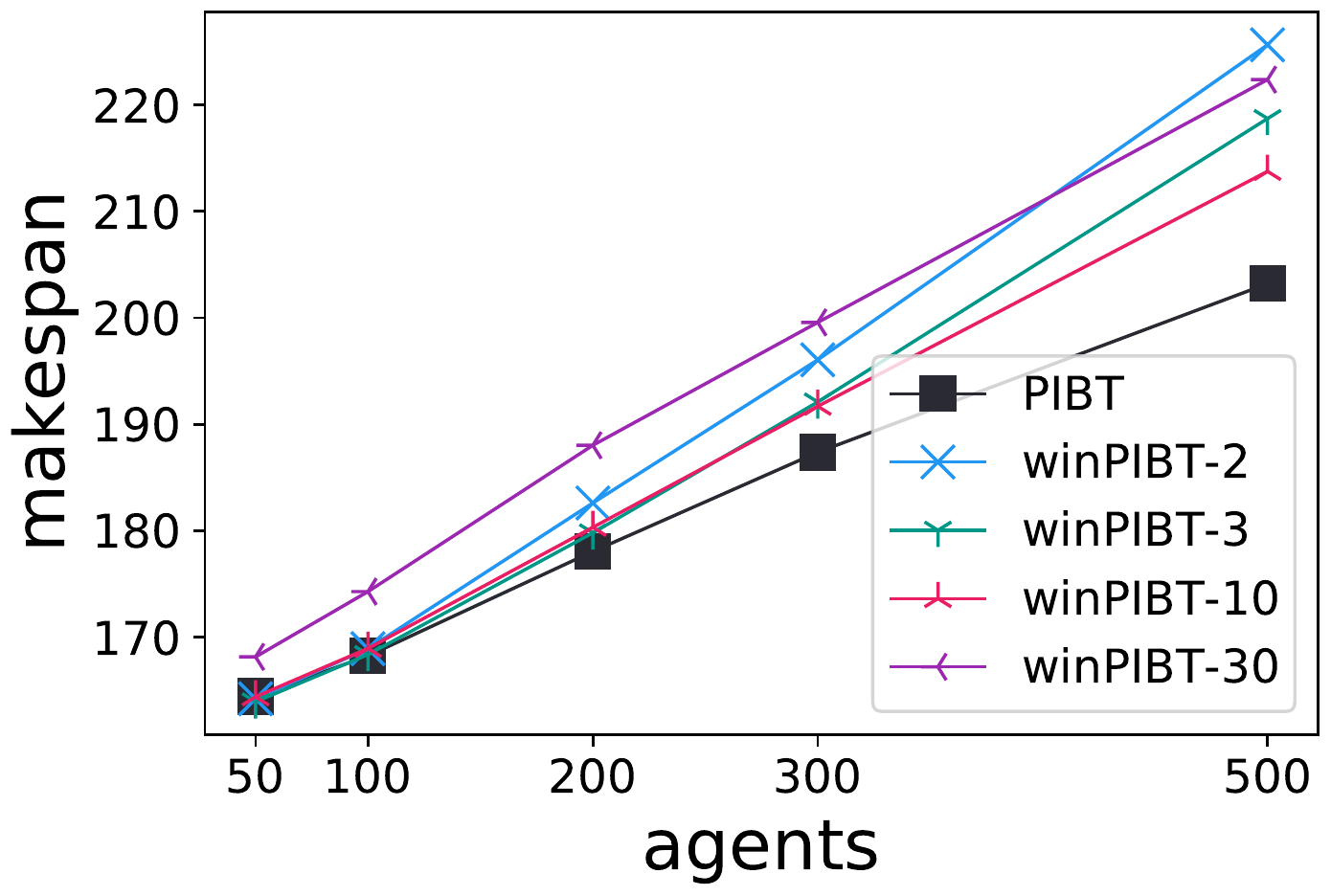}
  \end{minipage}
  \begin{minipage}{0.22\hsize}
   \centering
   \includegraphics[width=1\hsize]{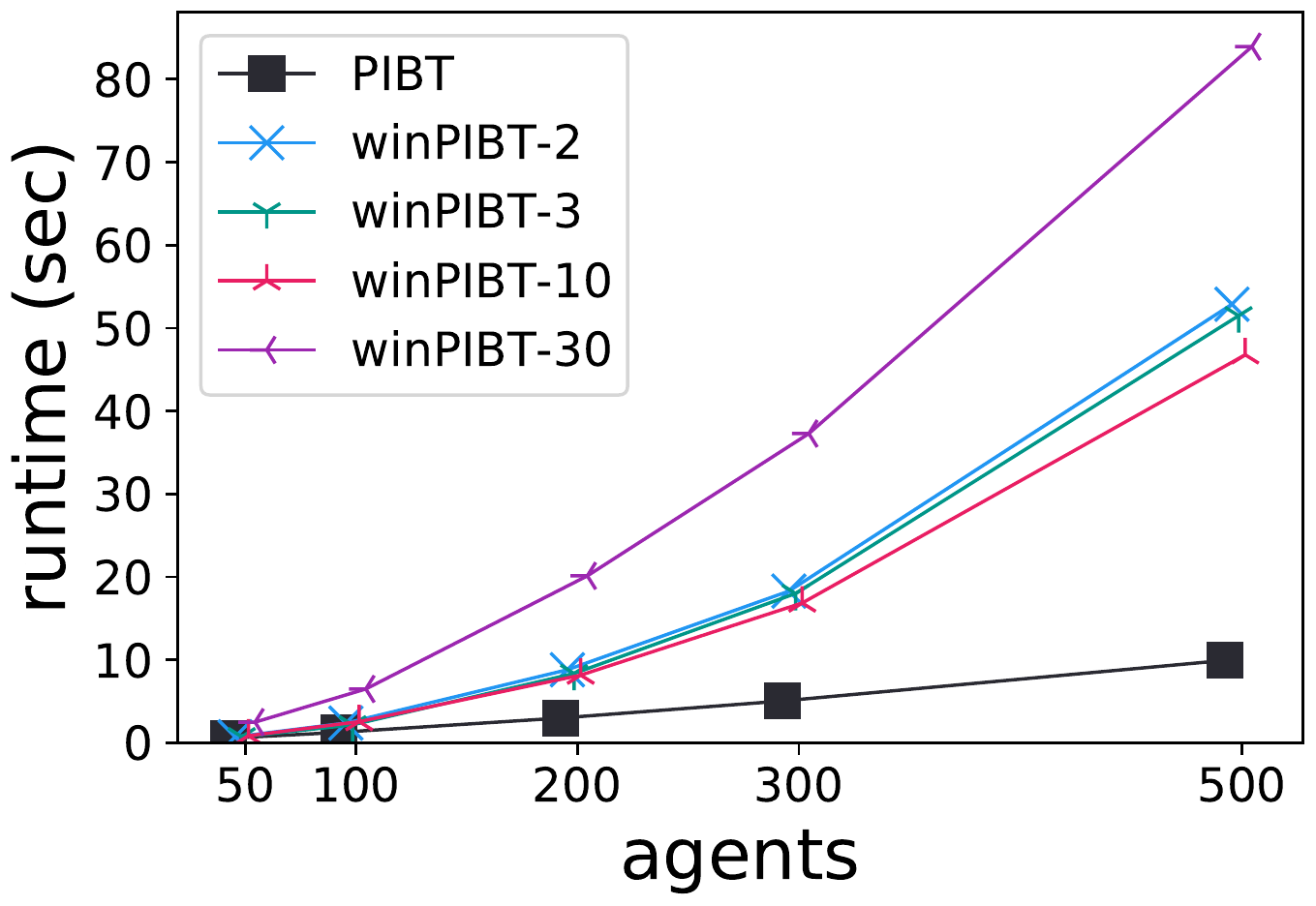}
  \end{minipage}
  \begin{minipage}{0.22\hsize}
   \centering
   \includegraphics[width=1\hsize]{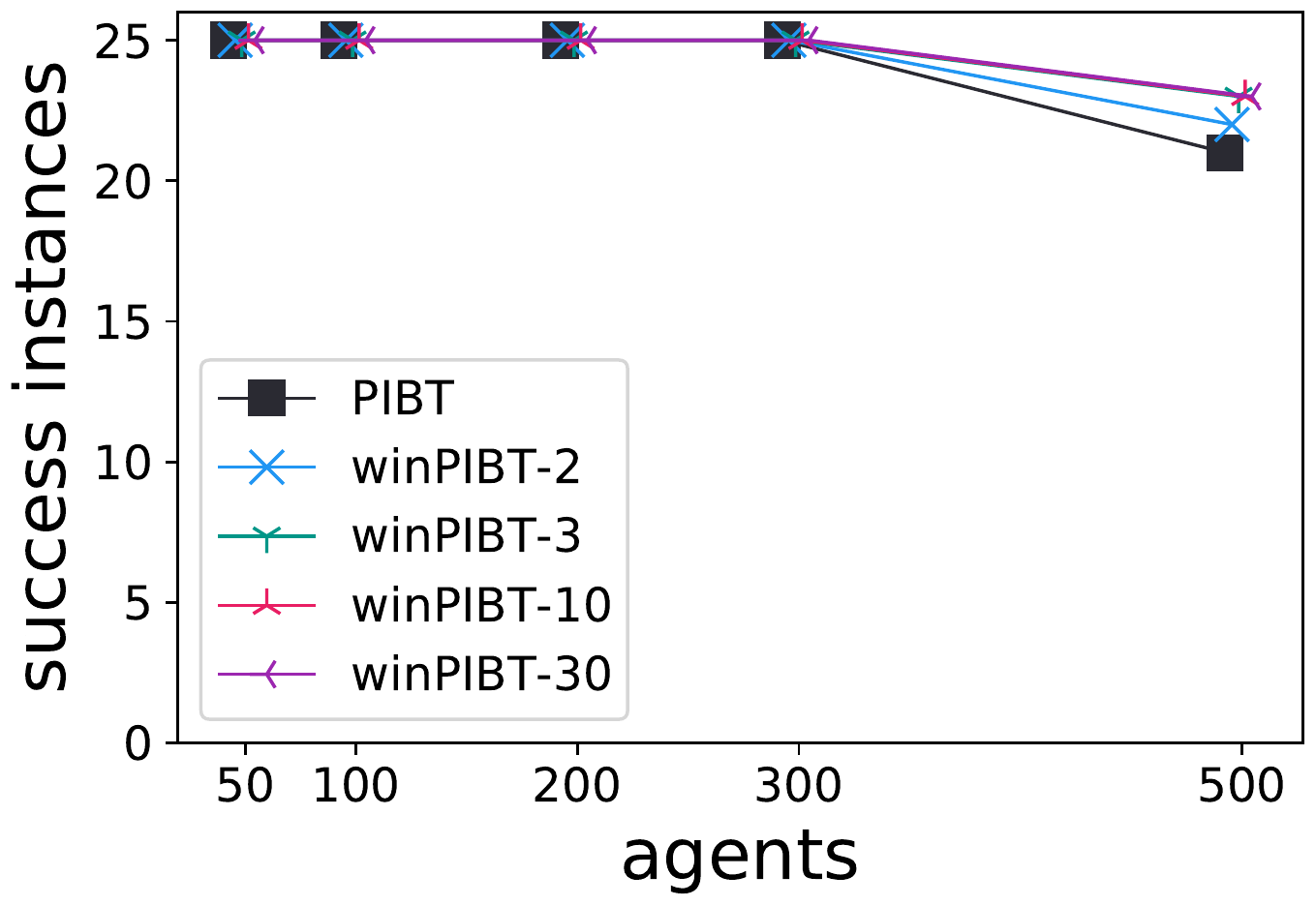}
  \end{minipage}
 \end{tabular}
 \caption{Results of classical MAPF with MAPF benchmark.
   Average scores over instances that were successfully solved by all solvers are shown.
   In the case with 500 agents in \emptymid, the scores are calculated excluding PIBT since PIBT failed in all instances.
   Note that, as for cost per agent and makespan, the vertical axes do not start from 0 to highlight the difference between solvers.
   }
 \label{fig:mapf-result-2}
\end{figure*}
\begin{figure*}[h!]
 \centering
 \begin{tabular}{cccc}
  \begin{minipage}{0.23\hsize}
   \centering
   {\scriptsize \kivalike}
   \vspace{0.1cm}
   \includegraphics[width=1\hsize]{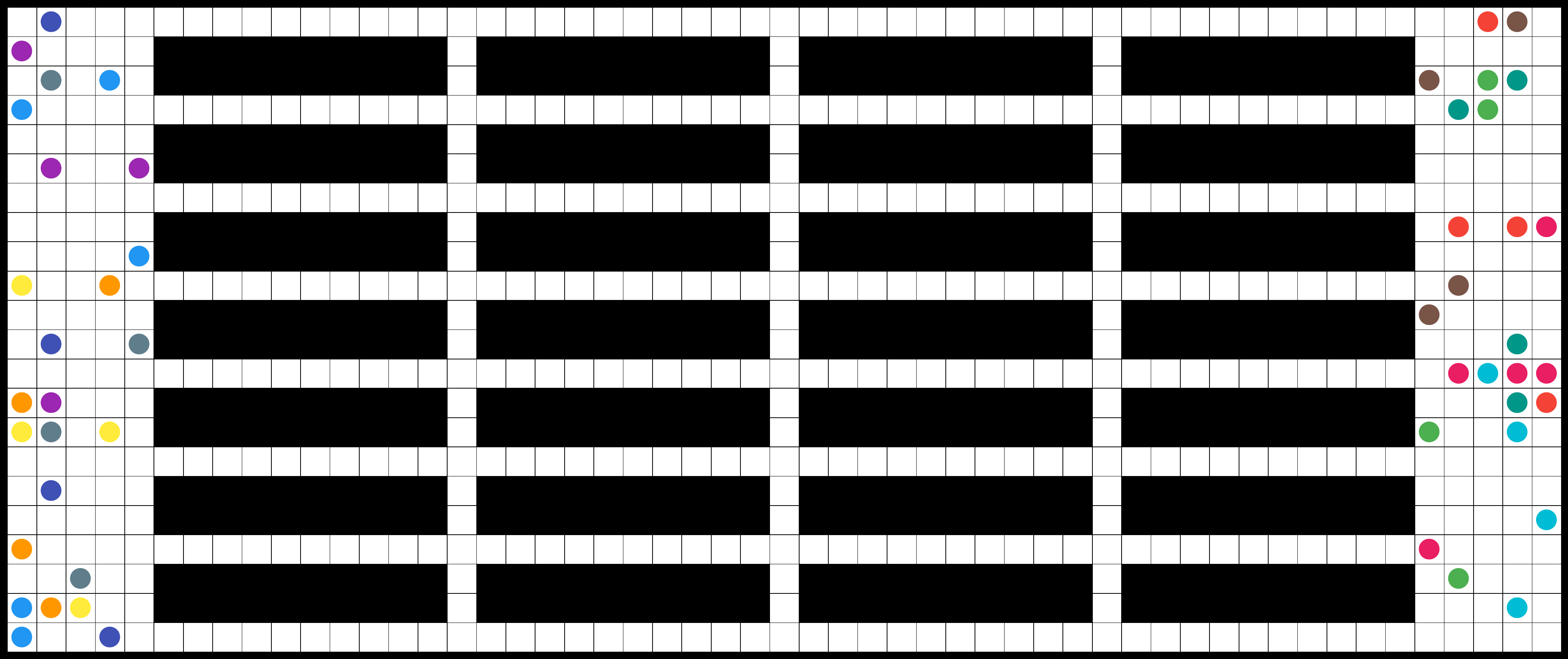}
  \end{minipage}
  \begin{minipage}{0.22\hsize}
   \centering
   \includegraphics[width=1\hsize]{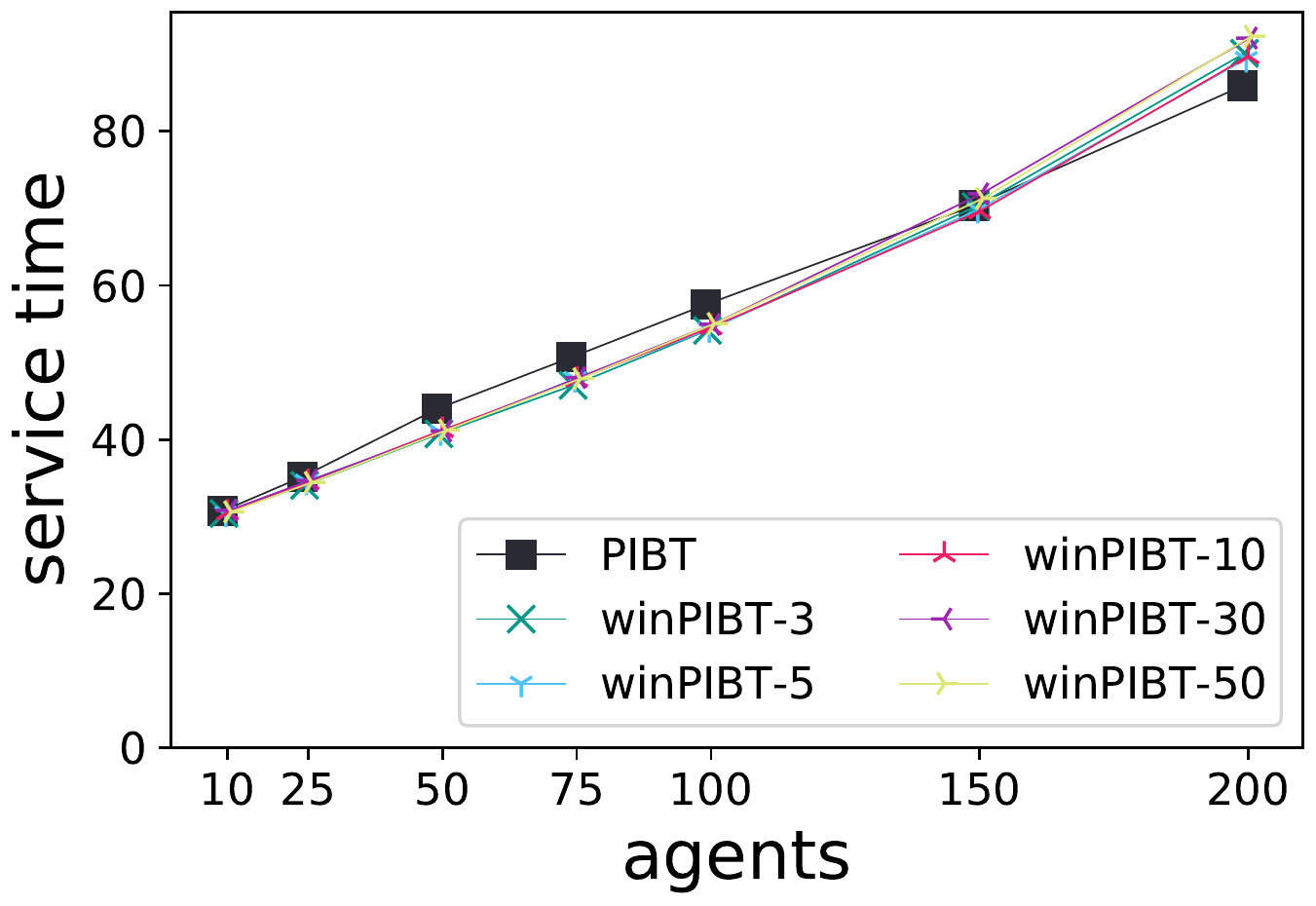}
  \end{minipage}
  \begin{minipage}{0.22\hsize}
   \centering
   \includegraphics[width=1\hsize]{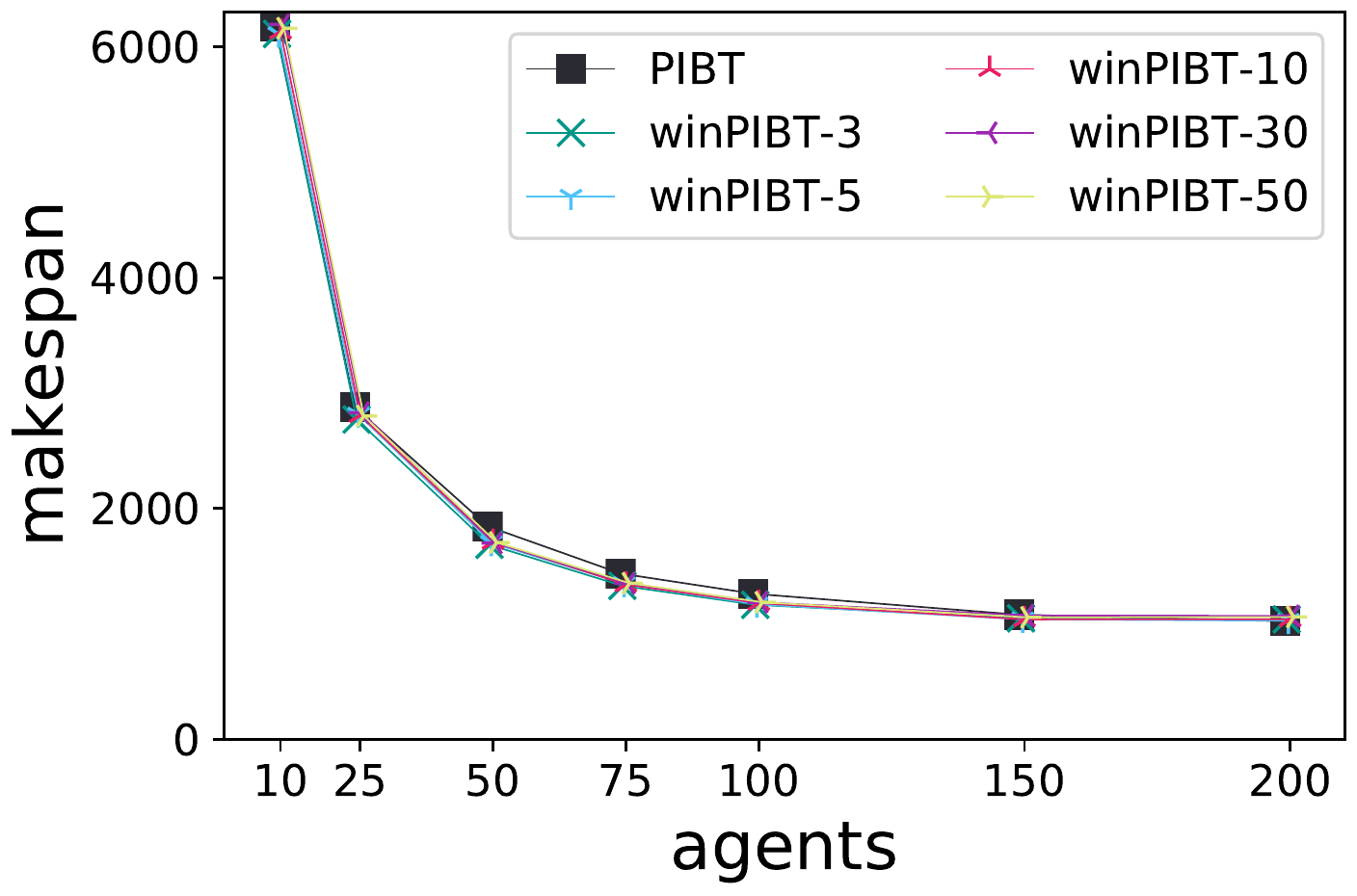}
  \end{minipage}
  \begin{minipage}{0.22\hsize}
   \centering
   \includegraphics[width=1\hsize]{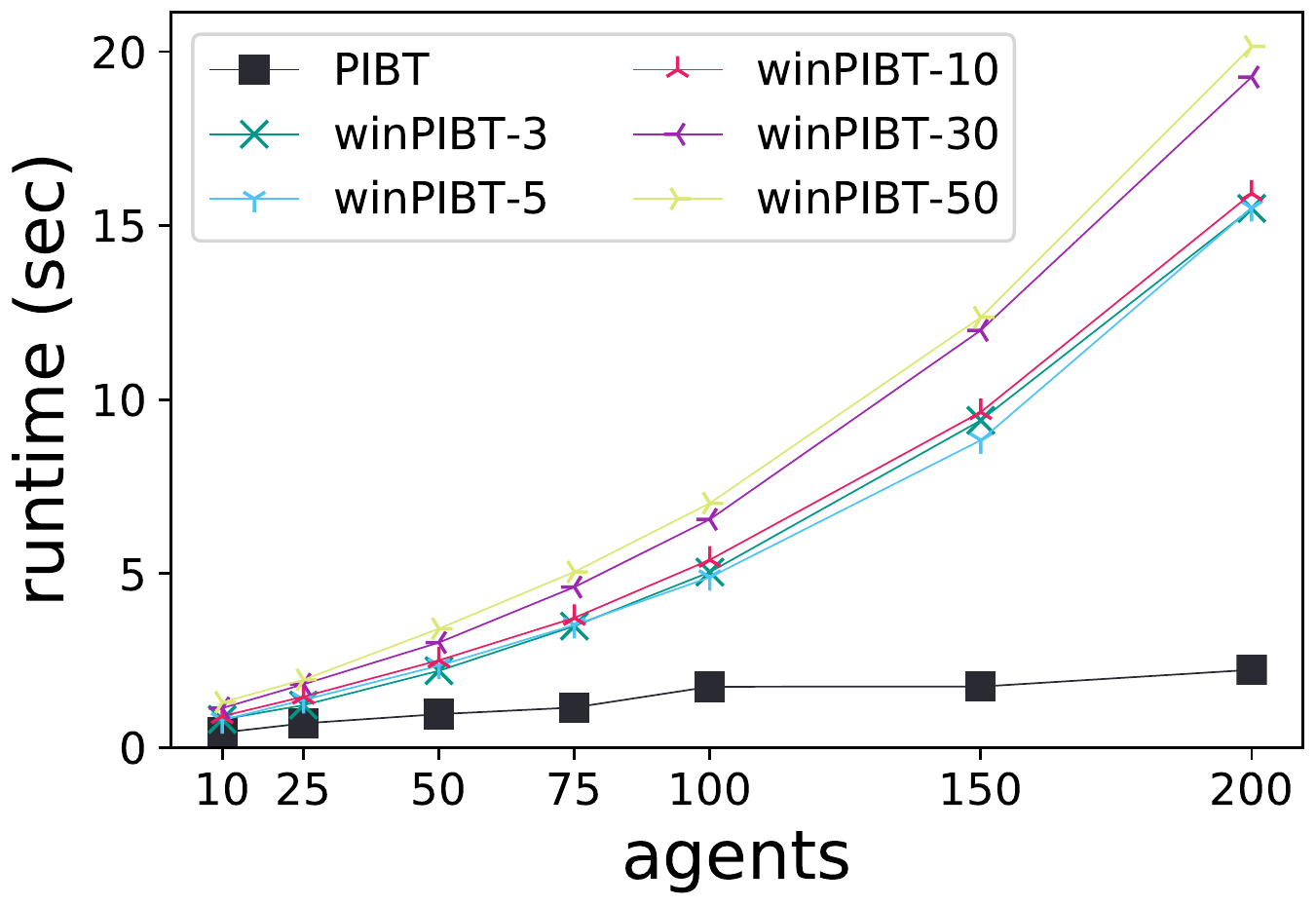}
  \end{minipage}
 \end{tabular}
 \caption{Results of \naive iterative MAPF in \kivalike. Average scores are shown.}
 \label{fig:imapf-result}
\end{figure*}

\subsubsection{Iterative Use}
\label{subsubsec:iterative}
For iterative use like Multi-agent Pickup and Delivery~\cite{ma2017lifelong}, it is meaningless to force agents to stay at their goal locations from their arrival until $t=\alpha$, i.e., $\alpha$ in function $\mathsf{winPIBT}$ can be treated more flexibly.
Once an agent reaches its destination, the agent can immediately return the backtracking by adding the following modifications in function $\mathsf{winPIBT}$ [Algorithm~\ref{algo:func-winpibt}].
Let $\delta$ be the timestep when an agent $a_{i}$ reaches its destination $g_{i}$ according to the calculated path and $\delta \leq \alpha$.
First, register the ideal paths until timestep $\delta$, not $\alpha$ [Line~\ref{algo:func-winpibt:register1},\ref{algo:func-winpibt:register2}].
Second, replace $\alpha$ with $\delta$ [Line~\ref{algo:func-winpibt:while-start}--\ref{algo:func-winpibt:endwhile}].
As a result, $a_{i}$ reserves its path until timestep $\delta$ and unnecessary reservations are avoided.

\subsubsection{Decentralized Implementation}
PIBT with decentralized fashion requires that each agent senses its surroundings to detect potential conflicts, then, communicates with others located within 2-hops, which is the minimum assumption to achieve conflict-free planning.
The part of priority inheritance and backtracking can be performed by information propagation.
\winpibt with decentralized fashion is an almost similar way to PIBT, however, it requires agents to sense and communicate with other agents located within $2{w}$-hops, where $w$ is the maximum window size that agents are able to take.
In this sense, there is an explicit trade-off; \emph{To do better anticipation, agents need expensive ability for sensing and communication}.

\section{Evaluation}
\label{sec:evaluation}
This section evaluates the performance of \winpibt quantitatively by simulation.
Our experiments are twofold: classical MAPF and \naive iterative MAPF.
The simulator was developed in C++
\footnote{
The code is available at \url{https://github.com/Kei18/pibt}
}, and all experiments were run on a laptop with Intel Core i5 1.6GHz CPU and 16GB RAM.
\astar was used to obtain the shortest paths satisfying constraints.

\subsection{Classical MAPF}
\subsubsection{Basic Benchmark}
To characterize the basic aspects of the effect of the window size, we first tested \winpibt in four carefully chosen fields, while fixing the number of agents.
Three fields (\fourthree, \bridge, \twobridge; Fig.~\ref{fig:mapf-result-1}) are original.
In these fields, 10 scenarios were randomly created such that starts and goals were set to nodes in left/right-edge and in right/left-edge, respectively.
The warehouse environment (\kivalike; Fig.~\ref{fig:imapf-result}) is from~\cite{cohen2015feasibility}.
In \kivalike, 25 scenarios were randomly created such that starts and goals were set in left/right-space, right/left-space, respectively.
As baselines for path efficiency, we obtained optimal and bounded sub-optimal solutions by Conflict-based Search (CBS)~\cite{sharon2015conflict} and Enhanced CBS (ECBS)~\cite{barer2014suboptimal}.
We also tested PIBT as a comparison.

We report the sum of cost (SOC) in Fig.~\ref{fig:mapf-result-1}.
We observe that no window size is dominant, e.g., $3$ in \fourthree, $6$ in \twobridge, $50$ in \kivalike work well respectively, and there is little effect of window size in \bridge.
Intuitively, efficient window size seems to depend on the length of narrow passages if detours exist, as shown in Fig~\ref{fig:motivating-example}.
In empty spaces, window size should be smaller to avoid unnecessary interference such as in Fig.~\ref{fig:winpibt-reservation:inconvenient}.
Although PIBT can be seen as \winpibt with window one, it may take time to reach the termination condition even in empty spaces due to the kind of livelock situations (see \fourthree).

\subsubsection{MAPF Benchmark}
Next, we tested \winpibt via MAPF benchmark~\cite{stern2019multi} while changing the number of agents.
Two maps (\emptymid and \ost) were chosen and 25 scenarios (random) were used.
Initial locations and destinations were given in order following each scenario, depending on the number of agents.
PIBT was also tested.
\winpibt or PIBT were considered failed when they could not reach termination conditions after 1000~timesteps.
These cases indicate occurrences of deadlock or livelock.
The former is due to the lack of graph condition, and the latter is due to dynamic priorities.

Fig.~\ref{fig:mapf-result-2} shows
1)~the results of cost per agent, i.e., normalized SOC,
2)~the makespan,
3)~the runtime, and
4)~the number of successful instances.
A nice characteristic of \winpibt is to mitigate livelock situations occurring in PIBT, regardless of the window size (see \emptymid).
The livelock in PIBT is due to oscillations of agents around their goals by the dynamic priorities.
\winpibt can improve this aspect with a longer lookahead.
As for cost, PIBT works better than \winpibt in tested cases since those two maps have no explicit detours like two-bridge.
Runtime results seemed to correlate with the window size, e.g., they took long time when the window size is 30.
Runtime results also correlate with the makespan since (win)PIBT solves problems in online fashion.
This is why the implementation with a small window took time compared with the larger one.

\subsection{\Naive Iterative MAPF}
We used \kivalike as testebed for \naive iterative MAPF.
The number of tasks $K$ for the termination was set to $2000$.
We tried $10$ repetitions of each experiment with randomly set initial positions.
New goals were given randomly.
\winpibt is modified for iterative use.
PIBT was used as comparison.
Note that since \kivalike is \graphcond, PIBT and \winpibt are ensured to terminate.

The results of 1) service time, 2) makespan and 3) runtime are shown in Fig.~\ref{fig:imapf-result}.
The effect of window size on path efficiency is marginal;
Depending on the number of agents, there is little improvement compared with PIBT.
We estimate that the reason for the small effect is as follows.
First, both small and large window have bad situations.
In the lifelong setting as used here, agents may encounter both good and bad situations.
Second, truly used window size becomes smaller than the parameter, since the algorithm used here does not allow for agents with lower priorities to disturb planning by higher priorities.
This characteristic combined with the treatment for iterative use may counteract the effect of the window size.

\subsection{Discussion}
In generally, in long aisles where agents cannot pass each other, PIBT plans awkward paths as explained in Fig.~\ref{fig:motivating-example}.
\winpibt can improve PIBT in this aspect (see \kivalike in classical MAPF), however, the empirical results demonstrate the limitation of the fixed window.
Fortunately, \winpibt allows agents to take different window sizes, meaning that, agents can adjust their window adaptively depending on situations, e.g., their locations, the density of agents.
For instance, it seems to be effective to set window size which is enough to cover whole of aisles when an agent tries to enter such aisles.
This kind of flexible solution is expected not only to improve path efficiency but also to reduce computation time.
Clarifying the relationship between window size and path efficiency helps to develop an adaptive version of \winpibt.
We believe that this direction will provide a powerful solution for iterative MAPF.

\section{Conclusion}
\label{sec:conclusion}
This paper introduces \winpibt which generalizes PIBT regarding the time window.
We define a \safe condition on all paths with different lengths and \winpibt relies on this concept.
The algorithm ensures the reachability for iterative MAPF in adequate properties of graphs, e.g., biconnected.
Empirical results demonstrate the potential of \winpibt by adjusting the window size.

Future work are as follows.
1)~Develop \winpibt with adaptive windows.
2)~Relax constraints on individual path planning, in such trivial cases as shown in Fig.~\ref{fig:winpibt-reservation:inconvenient}.


\bibliographystyle{ACM-Reference-Format}  
\bibliography{ref}  

\end{document}